\documentclass[fleqn,11pt,twoside]{article}

\usepackage{amsthm,amssymb,amsfonts,color,xcolor,epsfig,graphics,subfigure}

\usepackage[T1]{fontenc}
\usepackage[utf8]{inputenc}
\usepackage{amsmath,graphicx,latexsym,lscape,mathtools,mathrsfs,bm}
\usepackage[hyperfootnotes=false,hidelinks]{hyperref}
\makeatletter
\newcommand{\copyrightnote}[2]{{\renewcommand{\thefootnote}{}
 \footnotetext{\small\it
\begin{flushleft}
 \copyright \ #1   #2  
\end{flushleft}}}}

\newcommand{\Name}[1]{\begin{flushleft}
                       \LARGE \bf #1
                       \end{flushleft}\vspace{-3mm}}

\newcommand{\Author}[1]{\begin{flushleft}
                       \it #1 \end{flushleft}}

\newcommand{\Address}[1]{\begin{flushleft}
                       \it #1 \end{flushleft}}

\newcommand{\Date}[1]{\begin{flushleft}
                      \small  \it #1 \end{flushleft}}

\newcommand{\evenhead}{Author \ name}
\newcommand{\oddhead}{Article \ name}

\renewcommand{\@evenhead}{
\hspace*{-3pt}\raisebox{-15pt}[\headheight][0pt]{\vbox{\hbox to \textwidth
{\thepage \hfil \evenhead}\vskip4pt \hrule}}}
\renewcommand{\@oddhead}{
\hspace*{-3pt}\raisebox{-15pt}[\headheight][0pt]{\vbox{\hbox to \textwidth
{\oddhead \hfil \thepage}\vskip4pt\hrule}}}
\renewcommand{\@evenfoot}{}
\renewcommand{\@oddfoot}{}

\long\def\@makecaption#1#2{%
  \vskip\abovecaptionskip
  \sbox\@tempboxa{\small \textbf{#1.}\ \ #2}%
  \ifdim \wd\@tempboxa >\hsize
    {\small \textbf{#1.}\ \ #2}\par
  \else
    \global \@minipagefalse
    \hb@xt@\hsize{\hfil\box\@tempboxa\hfil}%
  \fi
  \vskip\belowcaptionskip}

\newcommand{\JNMPnumberwithin}[3][\arabic]{%
  \@ifundefined{c@#2}{\@nocounterr{#2}}{%
    \@ifundefined{c@#3}{\@nocnterr{#3}}{%
      \@addtoreset{#2}{#3}%
      \@xp\xdef\csname the#2\endcsname{%
        \@xp\@nx\csname the#3\endcsname .\@nx#1{#2}}}}%
}

\renewenvironment{proof}[1][\proofname]{\par
  \normalfont
  \topsep6\p@\@plus6\p@ \trivlist
  \item[\hskip\labelsep\textbf{%
    #1\@addpunct{.}}]\ignorespaces
}{%
  \qed\endtrivlist
}

\newcommand{\resetfootnoterule} {
  \renewcommand\footnoterule{%
  \kern-3\p@
  \hrule\@width.4\columnwidth
  \kern2.6\p@}
}

\renewcommand{\footnoterule}{}

\makeatother

\numberwithin{equation}{section}
\theoremstyle{plain}
\newtheorem{theorem}{Theorem}[section]
\newtheorem{lemma}[theorem]{Lemma}

\theoremstyle{definition}

\theoremstyle{remark}
\newtheorem{remark}[theorem]{Remark}

\newcommand{\id}{\mathrm{id}}
\newcommand{\ket}[1]{\left|#1\right\rangle}
\newcommand{\bra}[1]{\left\langle #1\right|}

\begin{document}

\renewcommand{\evenhead}{ {\LARGE\textcolor{blue!10!black!40!green}{{\sf \ \ \ ]ocnmp[}}}\strut\hfill 
P Padmanabhan, S Maity, A Sinha and V Korepin
}
\renewcommand{\oddhead}{ {\LARGE\textcolor{blue!10!black!40!green}{{\sf ]ocnmp[}}}\ \ \ \ \  
Symmetries of the Generalized Yang--Baxter Equations
%: Exact Solution via Painleve' Symmetry Reduction
}

%%%% Matter for the first page 
\thispagestyle{empty}
\newcommand{\FistPageHead}[3]{
\begin{flushleft}
\raisebox{8mm}[0pt][0pt]
{\footnotesize \sf
\parbox{150mm}{{\textcolor{blue!10!black!40!green}{{\bf Open Communications in Nonlinear Mathematical Physics}}}
\ \ {Special Issue: Hietarinta}, 2026\\[0.1cm]
\strut\hfill 
ocnmp:18633
pp #2\hfill {\sc #3}}}\vspace{-13mm}
\end{flushleft}}

\FistPageHead{1}{\pageref{firstpage}--\pageref{lastpage}}{ \ \ }

\strut\hfill

\strut\hfill

\copyrightnote{The authors.}{Distributed under a Creative Commons Attribution 4.0 International License.}

\begin{center}

{\bf {\large A Special OCNMP Issue in Honour of Jarmo Hietarinta}}\\[0.2cm]
{\bf {\large on the Occasion of his 80th Birthday}}
\end{center}

\smallskip

\Name{Symmetries of the Generalized Yang--Baxter Equations}

\Author{Pramod Padmanabhan, Somnath Maity, Akash Sinha and Vladimir Korepin}

\Address{P. Padmanabhan, S. Maity and A. Sinha: Department of Physics, School of Basic Sciences,\\
Indian Institute of Technology Bhubaneswar, 752050, India.\\
E-mail: pramod23phys@gmail.com; somnathmaity126@gmail.com; akash26121999@gmail.com}
\Address{V. Korepin: C. N. Yang Institute for Theoretical Physics, Stony Brook University,\\
New York 11794, USA. E-mail: vladimir.korepin@stonybrook.edu}

\Date{Received June 26, 2026; Accepted September 8, 2026}

\setcounter{equation}{0}

\smallskip

\noindent
{\bf Citation format for this Article:}\newline
Pramod Padmanabhan, Somnath Maity, Akash Sinha and Vladimir Korepin,
Symmetries of the Generalized Yang--Baxter Equations,
{\it Open Commun. Nonlinear Math. Phys.}, Special Issue:\,Hietarinta, ocnmp:18633, \pageref{firstpage}--\pageref{lastpage}, 2026.

\strut\hfill

\noindent
{\bf DOI:}
{\it 10.46298/ocnmp.18633}
\strut\hfill

\begin{abstract}

\noindent
The generalized Yang-Baxter equations are multi-site versions of the standard Yang-Baxter equation. When spectral parameters are included, such equations are expected to lead to integrable Hamiltonians with local interactions involving multiple degrees of freedom. In this work we characterize both the continuous and discrete symmetries of these equations required to establish an equivalence class of solutions. We find that the set of such symmetries depends on the number of sites on which the equation is supported. In several cases there are more symmetries than the standard Yang-Baxter equation, thus placing heavy constraints on the number of inequivalent solutions and the associated integrable models. As an application we show how to restrict the generic $16\times16$ ansatz to the 30-dimensional subspace invariant under a chosen discrete symmetry subgroup. An explicit $16\times 16$ solution is also written down.

\end{abstract}

\label{firstpage}

\section{Introduction}
\label{sec:intro}
%------------------------------------------------
One dimensional spin chains are the most well-studied examples of quantum integrable systems. These systems are typically governed by Hamiltonians with local interactions between nearest-neighbor spins. A large class of them can be systematically derived from certain solutions of the {\it Yang-Baxter equation} (YBE) \cite{YangCN1967,BAXTER1972193} using the formalism of the {\it quantum inverse scattering method} (QISM) \cite{Korepin1993QuantumIS,takhtadzhyan1979,Takhtadzhan_1979,Sklyanin1982QuantumVO,slavnov2019algebraicbetheansatz}. A natural generalization is to extend the study of quantum integrable systems to Hamiltonians with quasi-local interactions. The latter involve interactions between a finite number of consecutive spins going beyond the nearest-neighbor coupling terms. Some examples of Hamiltonians with multi-site interactions have appeared in the past as ladder integrable models and integrable models with staggered spectral parameters \cite{Ikhlef2009TheSV,Frahm1997PropertiesOT,Frahm_1999,Batchelor_2000,Ambjorn2000IntegrableLT,Frahm1995IntegrableMO,foerster2000integrablegeneralisedspinladder,Batchelor_2007}.  To obtain such Hamiltonians with multi-site interaction terms using the formalism of QISM, it is essential to look for a multi-site generalization of the YBE. An opportunity for this is presented in the so-called {\it generalized Yang-Baxter equation} (gYBE), whose constant version or the version without spectral parameters was introduced in \cite{rowell2010extraspecialtwogroupsgeneralizedyangbaxter}. The $(d,l,m)$-gYBE is an operator equation defined as follows:
\begin{equation}\label{eq:gYBE}
     R_{i_1\cdots i_l}R_{i_{m+1}\cdots i_{m+l}}R_{i_1\cdots i_l} = R_{i_{m+1}\cdots i_{m+l}} R_{i_1\cdots i_l}R_{i_{m+1}\cdots i_{m+l}}~;~m<l\in\mathbb{Z}^+. 
\end{equation}
The solution of this equation, known as the $R$-matrix, acts on $l$ sites. On each of these sites a local Hilbert space of dimension $d$ is attached. This is typically taken to be $\mathbb{C}^d$ in the context of spin chains. The integer $m$ denotes the shift in the indices between two consecutive $R$-matrices in the $(d,l,m)$-gYBE. Notice that in this notation, $l=2, m=1$ reduces \eqref{eq:gYBE} to the standard YBE. The index structure of the gYBE resembles the index structure of the {\it braid relations}. Thus, the constant solutions of the gYBE can also be used to obtain a representation of the full braid group. For this the translated \(R\)-operators must additionally satisfy the appropriate far-commutativity relations. For this reason we will sometimes use the term braid relations in place of the YBE or the gYBE in the rest of the text.  

These multi-site $R$-matrices satisfying the gYBE have been interpreted as operators generating quantum entanglement and their unitary versions have been used as quantum gates in quantum circuits \cite{chen2012generalized,Padmanabhan_2020,padmanabhan2020generating,rowell2010extraspecialtwogroupsgeneralizedyangbaxter}. This interpretation develops an idea relating topological and quantum entanglement \cite{Aravind1997,sugita2007borromeanentanglementrevisited,padmanabhan2019quantum,kauffman2019topological,quinta2018classifying,balasubramanian2017multi,balasubramanian2018entanglement,dwivedi2018entanglement,melnikov2019topological,rowell2018mathematics}. For applications of \eqref{eq:gYBE} to quantum integrability, we need to introduce complex spectral parameters into the gYBE. However, for the purposes of this work we will see that introducing these parameters does not lead to drastically new results and so we will drop their usage in the rest of this work.

While many solutions of the constant gYBE have been found, the broader goal of classifying solutions, say along the lines of Jarmo Hietarinta's $4\times 4$ solutions of the YBE \cite{HIETARINTA-PLA,hietarinta1993-JMP-Long,Hietarinta1993-BookChapter,hietarinta-conferenceProceeding,MSPK-Hietarinta}, is still lacking for the gYBE. This is a difficult problem using numerics as the dimensions of the $R$-matrices drastically increase with $l$ and $m$. This follows from the following properties of the $(d,l,m)$-gYBE. First we note that the $(d,l,m)$-gYBE is non-trivially supported on the space $\bigotimes_{j=1}^{l+m}\mathbb{C}_j^d$ and the $R$-matrix is of size $d^l\times d^l$. This implies that the $(d,l,m)$-gYBE is a set of $d^{2(l+m)}$ equations in $d^{2l}$ variables. This makes the multi-site $R$-matrices solutions of a highly non-linear and overdetermined system of equations. Therefore any reasonable attempt at a classification will require clever ans\"{a}tze that simplify the choices of these multi-site $R$-matrices for a given $d$. Such simplifications can be obtained by studying the symmetries of the $(d,l,m)$-gYBE. This is expected to reduce the number of unknown matrix elements of the $R$-matrices from $d^{2l}$ to a more manageable number by enforcing these symmetries on their ans\"atze. The goal of this work is precisely to characterize such symmetries and contrast them with those seen in the standard YBE or the $l=2, m=1$ case. Thus we hope that the results in this work will lay the foundation for classifying solutions of the gYBE.

The results are organized as follows. Sections \ref{sec:c-symmetries} and \ref{sec:d-symmetries} discuss the continuous and discrete symmetries of the gYBE. A concrete application of these methods is considered in Section \ref{sec:solutions}, where we explicitly show how to simplify the ansatz of a $(2,4,2)$-gYBE from a matrix with 256 entries to one with just 30 entries. Some future directions make up Section \ref{sec:conclusion}.

It should be noted that the symmetries found in this work are obtained by generalizing known symmetries of the standard YBE to the multi-site setting. This generalization is already non-trivial and produces several interesting features of the gYBE. There could be other symmetries that have no counterpart in the standard YBE setting, but that will not be the focus of the current work. Thus we do not claim to have exhausted all the symmetries of the gYBE.

%------------------------------------------------
\section{Continuous Symmetries}
\label{sec:c-symmetries}
%------------------------------------------------
We will find the set of continuous symmetries of the $(d,l,m)$-gYBE by taking cues from analogous symmetries of the standard YBE. In this and the following sections we will always consider the local Hilbert space to be the $d$-dimensional complex vector space $\mathbb{C}^d$. Before going further we clarify the notion of symmetries of the gYBE. Let $R_{ij}$ be a solution of the $(d,l,m)$-gYBE. Then if a transformed version of $R_{ij}$ also solves the gYBE, this transformation is called a symmetry of the gYBE. The continuous transformations come in at least four types. Two of them are {\it unconstrained continuous symmetries}:
\begin{enumerate}
    \item Scaling symmetry.
    \item Homogeneous, $\mathrm{GL}(d,\mathbb{C})$ gauge symmetries.
\end{enumerate}
The other two are {\it constrained continuous symmetries}:
\begin{enumerate}
    \item Symmetries generated by certain invertible elements of the underlying representation space, subject to appropriate compatibility conditions with $R_{ij}$, also known as {\it twist symmetries}. 
    \item Constrained symmetries generated by certain invertible braid operators, also known as {\it generalized twist symmetries}.
\end{enumerate}
The reason for the terminology will become clear as we discuss each of these symmetries and their counterparts for the gYBE separately. In the rest of the section we will use $R$ $\left(\tilde{R}\right)$ for the untransformed (transformed) YBE or gYBE solutions.

%------------------------------------------------
\subsection{Unconstrained Continuous Symmetries}
\label{subsec:unconstrained-c-symmetries}
%------------------------------------------------
The scaling symmetry of the YBE is the simplest among the four continuous symmetries. This is given by 
\begin{equation}\label{eq:scaling-symmetry-YBE}
   R_{ij} \rightarrow \tilde{R}_{ij} = \kappa~R_{ij}~;~\kappa\in\mathbb{C}^\times.
\end{equation}
Substituting $\tilde{R}$ into the YBE multiplies both sides of the equation by the same constant factor $\kappa^3$, making the scaling transformation a symmetry of the YBE. This easily generalizes to the gYBE case in a straightforward manner as the transformed multi-site $R$,
\begin{equation}\label{eq:scaling-symmetry-gYBE}
   R_{j_1\cdots j_l} \rightarrow \tilde{R}_{j_1\cdots j_l} = \kappa~R_{j_1\cdots j_l}~;~\kappa\in\mathbb{C}^\times,
\end{equation}
also satisfies the $(d,l,m)$-gYBE.

Next, we consider the following similarity transform by an invertible operator $Q$\footnote{The operators $Q_i$ and $Q_j$ are translated versions of the same operator $Q$. This is taken to be the case for all the continuous symmetries discussed in this work.},
\begin{equation}\label{eq:gldc-gauge-symmetries-YBE}
  R_{ij}\rightarrow\tilde{R}_{ij} = Q_iQ_j~R_{ij}~Q_i^{-1}Q_j^{-1}.
\end{equation}
The proof that this is a symmetry is immediate by substituting $\tilde{R}$ into the appropriate braid relation obtained from \eqref{eq:gYBE}. The invertible operators $Q$ belong to $\mathrm{GL}(d,\mathbb{C})$. They can depend on a set of parameters that are allowed to vary over $\mathbb{C}$, making these symmetries a set of continuous symmetries akin to the infinite set of {\it gauge symmetries}\footnote{Strictly speaking the terminology, gauge symmetries, should be reserved for quantities depending on a continuous parameter, like the spacetime coordinate. In the context of the YBE this terminology is more suited for spectral parameter dependent YBE solutions. Regardless, we will adopt this terminology for the constant YBE solutions as well.}. Note that the same operator $Q$ has to act on both the indices $i$ and $j$. They also need to be factorized over the tensor product space $\mathbb{C}^d\otimes\mathbb{C}^d$. For these reasons we call these symmetries homogeneous $\mathrm{GL}(d,\mathbb{C})$ gauge symmetries. Relaxing these conditions will lead us to {\it entangled operators} such as $Q_{ij}$ as the similarity transform. But they need to satisfy additional constraints for it to preserve the YBE structure. Some such constraints will be studied in continuous symmetries that we will soon consider. 

In the multi-site setting, it is easily seen that the fully factorized local similarity transforms, generated by $Q\in\mathrm{GL}(d,\mathbb{C})$
\begin{equation}\label{eq:gldc-gauge-symmetry-gYBE}
    R_{j_1\cdots j_l}\rightarrow \left(Q_{j_1}\cdots Q_{j_l}\right)R_{j_1\cdots j_l}\left(Q_{j_1}^{-1}\cdots Q_{j_l}^{-1}\right),
\end{equation}
also satisfies the gYBE in \eqref{eq:gYBE}. However, in this case we have more such continuous symmetries that are not of the fully factorized form. This enhancement in the number of symmetries comes from the possibility of allowing entangled operators as valid similarity transforms. This depends on the values of $l$ and $m$. The reason for this increase is attributed to the enlarged support of the $R$-matrices. The precise form of these additional symmetries can be stated in the following Theorem.
\begin{theorem}[Enhanced homogeneous $\mathrm{GL}(d,\mathbb{C})$ gauge symmetries]\label{thm:gauge-symmetries}
Let $R$ be a solution of the $(d,l,m)$-gYBE, \eqref{eq:gYBE}. Then when $r=\gcd(l,m)>1$, the gYBE has an additional set of gauge symmetries given by the transformation
\begin{equation*}
    R_{j_1\cdots j_l}\rightarrow \left(\prod\limits_{n=1}^{\frac{l}{r}}~Q_{j_{(n-1)r+1}\cdots j_{nr}}\right)R_{j_1\cdots j_l}\left(\prod\limits_{n=1}^{\frac{l}{r}}~Q_{j_{(n-1)r+1}\cdots j_{nr}}^{-1}\right).
\end{equation*}
Each of the $Q$ factors is an entangled operator on $\bigotimes_{k=1}^r~\mathbb{C}^d_{j_k}$. Thus, $Q\in\mathrm{GL}(d^r,\mathbb{C})$. There are $\frac{l}{r}$ such factors supported on $\bigotimes_{k=1}^l~\mathbb{C}^d_{j_k}$.
\end{theorem}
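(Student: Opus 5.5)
The plan is to show that the transformed operator $\tilde{R}$ satisfies \eqref{eq:gYBE} by exhibiting a single invertible operator $\mathcal{Q}$ on the full support $\bigotimes_{k=1}^{l+m}\mathbb{C}^d_{i_k}$ that conjugates both $R_{i_1\cdots i_l}$ and $R_{i_{m+1}\cdots i_{m+l}}$ simultaneously. Once such a $\mathcal{Q}$ exists, both sides of \eqref{eq:gYBE} are conjugated by the same operator and the identity is preserved, exactly as in the proof of \eqref{eq:gldc-gauge-symmetry-gYBE}.

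First, partition the $l+m$ sites $i_1,\dots,i_{l+m}$ into consecutive blocks of length $r=\gcd(l,m)$. This is possible because $r$ divides $l+m$, which gives $(l+m)/r$ blocks $B_n=\{i_{(n-1)r+1},\dots,i_{nr}\}$. Define
\begin{equation*}
\mathcal{Q}=\prod_{n=1}^{(l+m)/r} Q_{B_n}.
\end{equation*}
The factors act on disjoint tensor factors, so they commute, and $\mathcal{Q}$ is invertible with inverse $\prod_n Q_{B_n}^{-1}$.

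The key observation is the following. Since $r\mid l$, the sites of $R_{i_1\cdots i_l}$ are exactly the union of blocks $B_1,\dots,B_{l/r}$. Since $r\mid m$, the sites $i_{m+1},\dots,i_{m+l}$ start at a block boundary and are exactly the union of blocks $B_{m/r+1},\dots,B_{(m+l)/r}$. Thus
\begin{equation*}
\mathcal{Q}\,R_{i_1\cdots i_l}\,\mathcal{Q}^{-1} = \Big(\prod_{n=1}^{l/r}Q_{B_n}\Big) R_{i_1\cdots i_l}\Big(\prod_{n=1}^{l/r}Q_{B_n}^{-1}\Big)=\tilde{R}_{i_1\cdots i_l}.
\end{equation*}
This holds because the factors $Q_{B_n}$ with $n>l/r$ act on sites disjoint from the support of $R$, so they commute with it and cancel against their inverses. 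The same argument shows that $\mathcal{Q}R_{i_{m+1}\cdots i_{m+l}}\mathcal{Q}^{-1}$ equals the translated $\tilde{R}_{i_{m+1}\cdots i_{m+l}}$. Here it is used that the $Q$ factors are translates of one fixed $Q\in\mathrm{GL}(d^r,\mathbb{C})$, so the block structure of the shifted $\tilde{R}$ matches the blocks $B_n$.

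Substituting these expressions into \eqref{eq:gYBE}, the inner $\mathcal{Q}^{-1}\mathcal{Q}$ pairs cancel. Each side then becomes $\mathcal{Q}(\cdots)\mathcal{Q}^{-1}$ applied to the corresponding side of the untransformed equation, which proves the theorem. The only delicate step, and really the crux, is the alignment check: both $R$-supports must be unions of whole blocks. This is precisely where $r\mid l$ and $r\mid m$ enter. It also indicates why an entangled $Q$ on more than $r$ sites would generally fail: if the block size $s$ did not divide $m$, the shifted $R$ would cut through a block, and the leftover parts of $Q_{B}$ would not cancel. This could be mentioned as a remark showing that $r$ is the maximal block size. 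The case $r=1$ reproduces \eqref{eq:gldc-gauge-symmetry-gYBE}, so the statement is genuinely new only for $r>1$.
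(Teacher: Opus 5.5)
Your proof is correct and follows the same route as the paper, which also observes that both sides of the transformed gYBE are conjugated by the single operator $\prod_{n=1}^{(l+m)/r} Q_{j_{(n-1)r+1}\cdots j_{nr}}$. Your block-alignment check, using $r\mid l$ and $r\mid m$, makes explicit a step the paper leaves implicit: that this one conjugation reproduces both translated copies of $\tilde{R}$.
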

\begin{proof}
    The proof follows from a direct substitution of the transformed $R$-matrix into the $(d,l,m)$-gYBE in \eqref{eq:gYBE}. We find that both sides of the equation are conjugated by the factor 
    \begin{equation*}
        \left(\prod\limits_{n=1}^{\frac{l+m}{r}}~Q_{j_{(n-1)r+1}\cdots j_{nr}}\right),
    \end{equation*}
    rendering the equality of the transformed gYBE. This completes the proof.
\end{proof}

The fully factorized gauge symmetry in \eqref{eq:gldc-gauge-symmetry-gYBE} can be viewed as a special case of the entangled gauge symmetries of Theorem \ref{thm:gauge-symmetries}. Thus when $r>1$, the possibility of such entangled gauge symmetries is more constraining for the possible inequivalent families of gYBE solutions when compared to the gauge symmetries of the standard YBE.

Now we provide a couple of examples to illustrate Theorem \ref{thm:gauge-symmetries}. Consider the $(d,6,2)$ and the $(d,6,4)$-gYBEs. In both cases $r=2$ and we find that the transformation,
\begin{equation*}
    R_{j_1\cdots j_6}\rightarrow \left(Q_{j_1j_2}Q_{j_3j_4}Q_{j_5j_6}\right)R_{j_1\cdots j_6}\left(Q_{j_1j_2}^{-1}Q_{j_3j_4}^{-1}Q_{j_5j_6}^{-1}\right),
\end{equation*}
leaves them invariant. Next, consider the $(d,6,3)$-gYBE. In this case $r=3$ and the additional entangled transformation is given by
\begin{equation*}
    R_{j_1\cdots j_6}\rightarrow \left(Q_{j_1j_2j_3}Q_{j_4j_5j_6}\right)R_{j_1\cdots j_6}\left(Q_{j_1j_2j_3}^{-1}Q_{j_4j_5j_6}^{-1}\right).
\end{equation*}

We can now summarize the set of continuous symmetries for the $(d,l,m)$-gYBE, including both the scaling and the $\mathrm{GL}(d^r,\mathbb{C})$ gauge symmetries, into the following compact expressions:
\begin{multline}\label{eq:unconstrianed-c-symmetries-gYBE}
    R_{j_1\cdots j_l} \rightarrow\begin{cases}
        \kappa~\left(Q_{j_1}\cdots Q_{j_l}\right)R_{j_1\cdots j_l}\left(Q_{j_1}^{-1}\cdots Q_{j_l}^{-1}\right),~~r=1 \\
        \kappa\left(\prod\limits_{n=1}^{\frac{l}{r}}Q_{j_{(n-1)r+1}\cdots j_{nr}}\right)R_{j_1\cdots j_l}\left(\prod\limits_{n=1}^{\frac{l}{r}}Q_{j_{(n-1)r+1}\cdots j_{nr}}^{-1}\right),~~r>1,
    \end{cases}
\end{multline}
where $r=\gcd(l,m)$ and $\kappa\in\mathbb{C}^\times$.

%------------------------------------------------
\subsection{Constrained Continuous Symmetries}
\label{subsec:constrained-c-symmetries}
%------------------------------------------------
The next set of continuous symmetries are generated by invertible operators. We will state and prove these in the following Lemmas. 
\begin{lemma}[Twist transformations or the constrained $\mathrm{GL}(d,\mathbb{C})$ gauge symmetries of the YBE - I]\label{lem:twist-YBE}
Consider invertible operators $V$ and $W$. If $\left [R_{ij}, V_iW_j \right]=0$, then both $V_iR_{ij}V_i^{-1}$ and $W_jR_{ij}W_j^{-1}$, satisfy the YBE.  
\end{lemma}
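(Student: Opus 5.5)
The idea is to rewrite the one-sided conjugation $V_iR_{ij}V_i^{-1}$ as a homogeneous gauge transformation composed with a one-sided transformation that commutes past $R$. Once that is done, the statement follows from the homogeneous $\mathrm{GL}(d,\mathbb{C})$ gauge symmetry \eqref{eq:gldc-gauge-symmetries-YBE}. I will work on $\mathbb{C}^d_1\otimes\mathbb{C}^d_2\otimes\mathbb{C}^d_3$ with the YBE $R_{12}R_{23}R_{12}=R_{23}R_{12}R_{23}$, and treat the case $\tilde R_{ij}=V_iR_{ij}V_i^{-1}$ first.

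The key step is to use the hypothesis $[R_{ij},V_iW_j]=0$ to move operators across $R$. Conjugating by $V_iW_j$ leaves $R_{ij}$ unchanged, so
\[
V_iR_{ij}V_i^{-1} = V_i\,(V_i^{-1}W_j^{-1}R_{ij}W_jV_i)\,V_i^{-1}=W_j^{-1}R_{ij}W_j .
\]
Thus the left-conjugated and right-conjugated forms coincide: $V_iR_{ij}V_i^{-1}=W_j^{-1}R_{ij}W_j$, and likewise $W_jR_{ij}W_j^{-1}=V_i^{-1}R_{ij}V_i$. This identity is what lets us pick, in each factor of the YBE, whichever representation makes the conjugating operators telescope.

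Next I substitute into the YBE. For $\tilde R_{12}=V_1R_{12}V_1^{-1}$ I use that form directly. For $\tilde R_{23}$ I use the alternative form $\tilde R_{23}=W_3^{-1}R_{23}W_3$. Then $\tilde R_{12}$ is conjugation by $V_1$, which commutes with everything on site 3, while $\tilde R_{23}$ is conjugation by $W_3^{-1}$, which commutes with $R_{12}$ and $V_1$. Both are therefore conjugations by the single operator $V_1W_3^{-1}$:
\[
\tilde R_{12}=(V_1W_3^{-1})R_{12}(V_1W_3^{-1})^{-1},\qquad \tilde R_{23}=(V_1W_3^{-1})R_{23}(V_1W_3^{-1})^{-1}.
\]
Hence both sides of the YBE for $\tilde R$ are the corresponding sides for $R$, conjugated by the same invertible operator $V_1W_3^{-1}$, and the equation holds. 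The case $W_jR_{ij}W_j^{-1}$ is identical after using $W_jR_{ij}W_j^{-1}=V_i^{-1}R_{ij}V_i$, with common conjugator $V_1^{-1}W_3$.

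The main obstacle is noticing that a naive substitution does not telescope. Writing $\tilde R_{12}=V_1R_{12}V_1^{-1}$ and $\tilde R_{23}=V_2R_{23}V_2^{-1}$ leaves stray factors of $V_2$ sandwiched between $R_{12}$ and $R_{23}$ that cannot be removed directly. The commutation hypothesis must be used to trade the site-2 conjugation for a conjugation on site 3, which is the step above. Once this swap is made, the remaining verification is a routine check that operators on disjoint sites commute.
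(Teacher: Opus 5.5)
Your proof is correct and follows the paper's own argument. Both use the commutation hypothesis to trade $V_jR_{jk}V_j^{-1}$ for $W_k^{-1}R_{jk}W_k$, so that each side of the transformed YBE becomes the corresponding side of the original one, conjugated by the common operator $V_iW_k^{-1}$ (your $V_1W_3^{-1}$). One small point: your opening framing says the result reduces to the homogeneous gauge symmetry, but $V_1W_3^{-1}$ is not of homogeneous form, so that symmetry is never actually used — and the argument as you carry it out does not need it.
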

\begin{proof}
    Using $V_iR_{ij}V_i^{-1}=W_j^{-1}R_{ij}W_j$, the l.h.s of the YBE for the transformed $R$ is given by
    \begin{align*}
        \tilde{R}_{ij}\tilde{R}_{jk}\tilde{R}_{ij} & = V_iR_{ij}V_i^{-1}V_jR_{jk}V_j^{-1}V_iR_{ij}V_i^{-1} \\ 
        & = V_iR_{ij}V_i^{-1}W_k^{-1}R_{jk}W_kV_iR_{ij}V_i^{-1} \\
        & = V_iW_k^{-1}\left(R_{ij}R_{jk}R_{ij}\right)W_kV_i^{-1}.
    \end{align*}
    The r.h.s of the YBE for the transformed $R$ becomes,
    \begin{align*}
        \tilde{R}_{jk}\tilde{R}_{ij}\tilde{R}_{jk} & = V_jR_{jk}V_j^{-1}V_iR_{ij}V_i^{-1}V_jR_{jk}V_j^{-1} \\
        & = W_k^{-1}R_{jk}W_kV_iR_{ij}V_i^{-1}W_k^{-1}R_{jk}W_k \\
        & = V_iW_k^{-1}\left(R_{jk}R_{ij}R_{jk}\right)W_kV_i^{-1}.
    \end{align*}
    In a similar manner, $W_jR_{ij}W_j^{-1}$, can be shown to satisfy the YBE. This completes the proof.
\end{proof}
Special instances of the symmetry in Lemma \ref{lem:twist-YBE} have appeared in the Fermi-Hubbard model setup as shown in Chapter 12, Lemma 11 of \cite{esslerfrahmgohmannklumperkorepin2005}. 

\begin{lemma}[Twist transformations or the constrained $\mathrm{GL}(d,\mathbb{C})$ gauge symmetries of the YBE - II]\label{lem:twist-YBE-2}
    Consider invertible operators $V$, $W$, $U$ and $T$ belonging to $\mathrm{GL}(d,\mathbb{C})$. If $\left[R_{ij},V_iU_j \right]=\left[R_{ij},T_iW_j \right]=0$, then the transformation $\tilde{R}_{ij}=Q_{ij}R_{ij}Q_{ij}^{-1}$ satisfies the YBE, when 
    $$ Q_{ij} \in \left\{V_iW_j, T_iU_j, V_iT_j, U_iW_j, T_iV_j, W_iU_j   \right\}.  $$
\end{lemma}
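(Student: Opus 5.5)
The plan is to reduce every case to Lemma \ref{lem:twist-YBE}, combined with the homogeneous gauge symmetry \eqref{eq:gldc-gauge-symmetries-YBE}. Where that reduction does not apply, I will fall back on direct substitution into the YBE on sites $i,j,k$. The basic tool is the pair of exchange relations that follow from the hypotheses:
\begin{equation*}
V_iR_{ij}V_i^{-1}=U_j^{-1}R_{ij}U_j,\qquad T_iR_{ij}T_i^{-1}=W_j^{-1}R_{ij}W_j .
\end{equation*}
They let an operator on the first leg of $R$ be traded for one on the second leg, and vice versa. By Lemma \ref{lem:twist-YBE}, each single-site conjugation by $V_i$, $U_j$, $T_i$ or $W_j$ already produces a YBE solution.

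\textbf{Cases $V_iW_j$ and $T_iU_j$.} These are composites of a first-leg twist from one commuting pair and a second-leg twist from the other, and I would handle them by iterating Lemma \ref{lem:twist-YBE}. For $Q_{ij}=V_iW_j$, first set $R'_{ij}=W_jR_{ij}W_j^{-1}$, which solves the YBE by Lemma \ref{lem:twist-YBE} applied to the pair $(T,W)$. Next check that $R'$ still satisfies a twist condition:
\begin{equation*}
\left[R'_{ij},\,V_i\,(WUW^{-1})_j\right]=W_j\left[R_{ij},V_iU_j\right]W_j^{-1}=0 .
\end{equation*}
This holds because $W_j$ commutes with $V_i$. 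Lemma \ref{lem:twist-YBE} applied to $R'$ with the pair $(V,WUW^{-1})$ then shows that $V_iR'_{ij}V_i^{-1}=Q_{ij}R_{ij}Q_{ij}^{-1}$ solves the YBE. The case $T_iU_j$ is identical, with the roles of the two pairs swapped and $(T,UWU^{-1})$ as the new commuting pair.

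\textbf{Cross cases $V_iT_j$, $T_iV_j$, $U_iW_j$, $W_iU_j$.} Here an operator that the hypotheses only control on one leg appears on the other leg, so Lemma \ref{lem:twist-YBE} does not apply directly. I would proceed by direct substitution. Write $Q_{ij}=A_iB_j$ and expand
\begin{equation*}
\tilde R_{ij}\tilde R_{jk}\tilde R_{ij}=A_iB_jR_{ij}B_j^{-1}A_i^{-1}\,A_jB_kR_{jk}B_k^{-1}A_j^{-1}\,A_iB_jR_{ij}B_j^{-1}A_i^{-1},
\end{equation*}
and similarly for the right-hand side. The aim is to use the exchange relations, together with the freedom to conjugate by a homogeneous $A_iA_jA_k$ or $B_iB_jB_k$, to collect all the interior factors $B_j^{-1}A_j$ into one overall conjugator, as in the proof of Lemma \ref{lem:twist-YBE}. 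The expected outcome is a form like $X_iY_k(\cdots)Y_k^{-1}X_i^{-1}$ on both sides. For example, for $V_iT_j$ I would first factor $V_iT_j=(V_iV_j)(V_j^{-1}T_j)$, strip off the homogeneous gauge factor, and then try to absorb $V_j^{-1}T_j$ using the two exchange relations on the legs where $R_{jk}$ and $R_{ij}$ meet.

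\textbf{Main obstacle.} I expect the cross cases to be the hard step. The middle factor $B_j^{-1}A_j$ involves one operator whose commutation with $R$ is only known on the opposite leg. If the exchange relations do not close, I would identify the extra compatibility needed, for instance $[R_{ij},U_iT_j]=0$ or the commutativity of $V$ with $T$. I would then state it explicitly as an implicit hypothesis for these four entries of the list rather than force the computation.
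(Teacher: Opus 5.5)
Your treatment of $V_iW_j$ and $T_iU_j$ by applying Lemma \ref{lem:twist-YBE} twice is correct; the paper instead substitutes directly in all six cases. The gap is the other four entries of the list. There you only sketch a computation, doubt that it closes, and consider adding hypotheses such as $[R_{ij},U_iT_j]=0$ or $[V,T]=0$. The lemma needs no such hypotheses, so as written your proposal proves only two of the six cases.

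The missing observation is that the pairs $(X,Y)$ with $[R_{ij},X_iY_j]=0$ are closed under products and inverses. In particular $(T^{-1}V)_i(W^{-1}U)_j=(T_iW_j)^{-1}(V_iU_j)$ commutes with $R_{ij}$, and so does its inverse $(V^{-1}T)_i(U^{-1}W)_j$. Each cross case is then a Lemma \ref{lem:twist-YBE} twist followed by a homogeneous gauge transformation \eqref{eq:gldc-gauge-symmetries-YBE}:
$V_iT_j=(T_iT_j)(T^{-1}V)_i$, $T_iV_j=(V_iV_j)(V^{-1}T)_i$, $U_iW_j=(U_iU_j)(U^{-1}W)_j$ and $W_iU_j=(W_iW_j)(W^{-1}U)_j$.
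The factorization you propose, $V_iT_j=(V_iV_j)(V^{-1}T)_j$, is the inconvenient one. It leaves $V^{-1}T$ on the second leg, where the hypotheses say nothing about it, so as it stands it is not a Lemma \ref{lem:twist-YBE} twist. It becomes one only after rewriting $(V^{-1}T)_j=(V^{-1}T)_i(V^{-1}T)_j(T^{-1}V)_i$, which uses the same observation.

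Your diagnosis of the obstacle is also mistaken. Site $j$ is at once the second leg of $R_{ij}$ and the first leg of $R_{jk}$.

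For $V_iT_j$, the interior factor $T_j^{-1}V_j$ is built from two first-leg-controlled operators and acts on the first leg of $R_{jk}$. Using $V_jR_{jk}V_j^{-1}=U_k^{-1}R_{jk}U_k$ and $T_j^{-1}R_{jk}T_j=W_kR_{jk}W_k^{-1}$, both sides of the transformed YBE become conjugates of the original sides by the same operator $V_iT_jT_kU_k^{-1}W_k$. This is exactly the paper's computation.

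For $U_iW_j$ and $W_iU_j$, the interior factor is second-leg controlled, so you move it through $R_{ij}$ instead. For $U_iW_j$ this gives the common conjugator $U_iT_i^{-1}V_iU_jW_k$.

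The only interior factors that cannot be traded through either neighbour are the mixed ratios, such as $W_j^{-1}V_j$. These arise only in the two cases you already settled by another route.
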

\begin{proof} 
We will write down the proofs for two of the $Q_{ij}$ operators. They will serve as the prototypes for the remaining cases. The steps for each case are similar to those used to prove Lemma \ref{lem:twist-YBE}. Consider the case when $Q_{ij}=V_iW_j$. The left hand side then simplifies as 
\begin{align*}
    \tilde{R}_{ij}\tilde{R}_{jk}\tilde{R}_{ij} & =
V_iW_k\left(W_jR_{ij}W_j^{-1}\right)V_jR_{jk}V_j^{-1}\left(W_jR_{ij}W_j^{-1}\right)V_i^{-1}W_k^{-1} \\
    & = V_iT_i^{-1}W_kR_{ij}U_k^{-1}R_{jk}U_k R_{ij}T_iV_i^{-1}W_k^{-1} \\
    & = V_iT_i^{-1}W_kU_k^{-1}\left(R_{ij}R_{jk}R_{ij}\right)T_iV_i^{-1}U_kW_k^{-1}.
\end{align*}
In a similar manner we can show that 
\begin{align*}
    \tilde{R}_{jk}\tilde{R}_{ij}\tilde{R}_{jk} = V_iT_i^{-1}W_kU_k^{-1}\left(R_{jk}R_{ij}R_{jk}\right)T_iV_i^{-1}U_kW_k^{-1}.
\end{align*}
A similar type of proof can be used for $Q_{ij}=T_iU_j$. In this case the transformed YBE is conjugated by $T_iV_i^{-1}U_kW_k^{-1}$.

Next we consider $Q_{ij}=V_iT_j$. Now the left hand side of the YBE simplifies as 
\begin{align*}
    \tilde{R}_{ij}\tilde{R}_{jk}\tilde{R}_{ij} & = V_iT_jT_kU_k^{-1} R_{ij}\left(T_j^{-1}R_{jk}T_j\right)R_{ij}V_i^{-1}T_j^{-1}U_kT_k^{-1} \\
    & = V_iT_jT_kU_k^{-1}W_k\left(R_{ij}R_{jk}R_{ij}\right)V_i^{-1}T_j^{-1}W_k^{-1}U_kT_k^{-1},
\end{align*}
and the right hand side becomes
\begin{align*}
   \tilde{R}_{jk}\tilde{R}_{ij}\tilde{R}_{jk} & = V_jT_kR_{jk}V_j^{-1}V_iT_jR_{ij}V_i^{-1}T_j^{-1}V_jR_{jk}V_j^{-1}T_k^{-1} \\
   & = V_iT_kU_k^{-1}R_{jk}\left(T_jR_{ij}T_j^{-1}\right)R_{jk}V_i^{-1}U_kT_k^{-1} \\
   & =  
V_iT_jT_kU_k^{-1}\left(T_j^{-1}R_{jk}T_j\right)R_{ij}\left(T_j^{-1}R_{jk}T_j\right)V_i^{-1}T_j^{-1}U_kT_k^{-1} \\
& = V_iT_jT_kU_k^{-1}W_k\left(R_{jk}R_{ij}R_{jk}\right)V_i^{-1}T_j^{-1}W_k^{-1}U_kT_k^{-1}.
\end{align*}
A similar type of proof can be used for the remaining $Q_{ij}$'s, namely $U_iW_j$, $T_iV_j$ and $W_iU_j$. The conjugation factors for the YBE in each of these cases is given by $U_iT_i^{-1}V_iU_jW_k$, $T_iV_jV_kW_k^{-1}U_k$ and $W_iV_i^{-1}T_iW_jU_k$, respectively.
\end{proof}

Unlike the $Q$ operators of the homogeneous gauge symmetries, the invertible operators $V$, $W$, $T$ and $U$ are constrained by commutation relations with $R_{ij}$'s. As their entries can vary over $\mathbb{C}$, they belong to $\mathrm{GL}(d,\mathbb{C})$ and thus we will denote these continuous symmetries as {\it constrained $\mathrm{GL}(d,\mathbb{C})$ gauge symmetries}. Note also that these symmetries are not special cases of the homogeneous gauge symmetries of \eqref{eq:gldc-gauge-symmetries-YBE}, as the symmetries in Lemma \ref{lem:twist-YBE} rotate the YBE solutions by acting on only one of the two sites. In this sense it is a partial similarity transform when compared to the full similarity transforms of \eqref{eq:unconstrianed-c-symmetries-gYBE}. On the other hand, the constrained symmetries of Lemma \ref{lem:twist-YBE-2}, reduce to the unconstrained symmetries in \eqref{eq:gldc-gauge-symmetries-YBE} only when $V=W$ or $U=T$.  
Lemma 2.3 provides a generalization of the twist transformation found in \cite{de2021yang, Kundu_2001}.

The analogous set of constrained gauge symmetries or twist transformations for the $(d,l,m)$-gYBE case, depends on the values of $l$ and $m$. For both the odd $(l=2p+1)$, and even $(l=2p)$ cases, the set of symmetry transformations differ for two sets of $m$ values. In the odd case they differ in the sets $m\in\{1,\cdots, p\}$ and $m\in\{p+1,\cdots, 2p\}$, whereas in the even case these sets are given by $m\in\{1,\cdots, p-1\}$ and $m\in\{p,\cdots, 2p-1\}$. 

The following two Theorems account for the twist symmetries analogous to those formulated in Lemma \ref{lem:twist-YBE} in these two sets, for the gYBE case.
\begin{theorem}[Twist transformations or the constrained $\mathrm{GL}(d,\mathbb{C})$ gauge symmetries of the gYBE - I - 1]\label{thm:twist-gYBE-1}
    For $l=2p+1(l=2p)$, choose the value of $m$ from the sets $\{p+1,\cdots, 2p\}$ $\left(\{p,\cdots, 2p-1\}\right)$. Use this to fix the value of $r=l-m>0$. Now consider the set of invertible operators $V_{i_1\cdots i_s}$ and $W_{i_{s+1}\cdots i_l}$ for $s\in \{r,\cdots, l-r\}$. If $$\left[R_{i_1\cdots i_l}, V_{i_1\cdots i_s}W_{i_{s+1}\cdots i_l} \right] = 0, $$ and $R_{i_1\cdots i_l}$ satisfies the $(d,l,m)$-gYBE, then both
\[
V_{i_1\cdots i_s}R_{i_1\cdots i_l}V_{i_1\cdots i_s}^{-1}
\quad\text{and}\quad
W_{i_{s+1}\cdots i_l}R_{i_1\cdots i_l}W_{i_{s+1}\cdots i_l}^{-1}
\]
satisfy the $(d,l,m)$-gYBE. 
\end{theorem}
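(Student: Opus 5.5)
The plan is to follow the proof of Lemma \ref{lem:twist-YBE}: rewrite each transformed $R$-matrix, using the commutation hypothesis, so that the conjugating operator is supported off the other $R$-matrix in the gYBE. Both sides of the transformed equation then become the same conjugate of the corresponding sides of the original gYBE.

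Label the $l+m$ sites as $1,\dots,l+m$ and write
\[
R_A = R_{1\cdots l},\qquad R_B = R_{m+1\cdots m+l},
\]
so the gYBE reads $R_AR_BR_A = R_BR_AR_B$. The support of $R_A$ is $\{1,\dots,l\}$ and that of $R_B$ is $\{m+1,\dots,m+l\}$; they overlap on the $r=l-m$ sites $\{m+1,\dots,l\}$.

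The hypothesis $[R_{i_1\cdots i_l},V_{i_1\cdots i_s}W_{i_{s+1}\cdots i_l}]=0$ is equivalent to
\[
V_{i_1\cdots i_s}R_{i_1\cdots i_l}V^{-1}_{i_1\cdots i_s} = W^{-1}_{i_{s+1}\cdots i_l}R_{i_1\cdots i_l}W_{i_{s+1}\cdots i_l}.
\]
This gives two representations of each transformed matrix.

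For the $V$-transformation I will use the following forms:
\[
\tilde R_A = V_{1\cdots s}R_AV^{-1}_{1\cdots s},\qquad
\tilde R_B = W^{-1}_{m+s+1\cdots m+l}R_BW_{m+s+1\cdots m+l}.
\]
The support conditions then do the work, and this is precisely where the range $s\in\{r,\dots,l-r\}$ enters.

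First, $V_{1\cdots s}$ is disjoint from the support of $R_B$ iff $s\le m=l-r$. Second, $W_{m+s+1\cdots m+l}$ is disjoint from the support of $R_A$ iff $m+s+1>l$, i.e. $s\ge r$. Third, these two operators act on disjoint sites, since $s<m+s+1$, so they commute with each other. The assumed range of $m$ (namely $m\ge p+1$ or $m\ge p$) guarantees $r\le l-r$, so the interval of allowed $s$ is nonempty.

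With these facts, write $\mathcal{C}=V_{1\cdots s}W^{-1}_{m+s+1\cdots m+l}$. For the left-hand side,
\[
\tilde R_A\tilde R_B\tilde R_A = \mathcal{C}\,(R_AR_BR_A)\,\mathcal{C}^{-1},
\]
since $W_{m+s+1\cdots m+l}$ passes through $R_A$ and $V_{1\cdots s}$ passes through $R_B$. The same manipulation gives
\[
\tilde R_B\tilde R_A\tilde R_B = \mathcal{C}\,(R_BR_AR_B)\,\mathcal{C}^{-1}.
\]
The gYBE for $R$ then yields it for $\tilde R$.

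For the $W$-transformation the argument is symmetric, with the roles of the two forms exchanged:
\[
\tilde R_A = V^{-1}_{1\cdots s}R_AV_{1\cdots s},\qquad
\tilde R_B = W_{m+s+1\cdots m+l}R_BW^{-1}_{m+s+1\cdots m+l}.
\]
The same three support facts apply, and the common conjugating factor is $\mathcal{C}^{-1}=V^{-1}_{1\cdots s}W_{m+s+1\cdots m+l}$.

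The only genuine obstacle is the bookkeeping of supports, namely checking that the two inequalities $r\le s\le l-r$ are exactly what is needed for the disjointness statements. Once that is settled, the computation is a verbatim copy of the two-site proof in Lemma \ref{lem:twist-YBE}.
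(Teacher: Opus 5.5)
Your proof is correct and is essentially the paper's argument: the paper reruns the computation of Lemma~\ref{lem:twist-YBE} with the replacements $V_i\to V_{i_1\cdots i_s}$, $V_j\to V_{i_{1+m}\cdots i_{s+m}}$, $W_k\to W_{i_{s+m+1}\cdots i_{l+m}}$, which yields the same conjugating factor $V_{1\cdots s}W^{-1}_{m+s+1\cdots m+l}$ as your $\mathcal{C}$. Your explicit support check supplies what the paper leaves implicit in saying these replacements go through: $s\le m$ makes $V_{1\cdots s}$ commute with $R_B$, and $s\ge r$ makes $W_{m+s+1\cdots m+l}$ commute with $R_A$. Together these pin down exactly the range $r\le s\le l-r$.
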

\begin{proof}
    The proofs for both the transformed operators are similar to that shown in Lemma \ref{lem:twist-YBE}. This is immediately seen with the following replacements:
    \begin{enumerate}
        \item $V_i\rightarrow V_{i_1\cdots i_s} $, $V_j\rightarrow V_{i_{1+m}\cdots i_{s+m}}$, $W_k\rightarrow W_{i_{s+m+1}\cdots i_{l+m}}$.
        \item $R_{ij}\rightarrow R_{i_1\cdots i_l}$, ~ $R_{jk}\rightarrow R_{i_{1+m}\cdots i_{l+m}}$.
    \end{enumerate} 
    \end{proof}
Note that the two symmetries $V$ and $W$ have support on multiple sites and are thus entangled operators, in contrast to the analogous YBE symmetries in Lemma \ref{lem:twist-YBE}. For a given value of $s$, they belong to $\mathrm{GL}(d^s, \mathbb{C})$ and $\mathrm{GL}(d^{l-s}, \mathbb{C})$ respectively.  

The Theorem \ref{thm:twist-gYBE-1} shows that for every $s$ value there are two twist symmetries generated by the entangled invertible operators $V$ and $W$. Thus for a given $l$ and a $m$, chosen from the appropriate sets of values, there are a total of $$2(l-2r+1)=2(2m-l+1),$$ such constrained gauge symmetries or twist transformations. As $m$ can take $p$ values, for both odd and even $l$, they account for the twist symmetries of the gYBEs in these cases. 

A special case occurs for even $l=2p$ and $m=r=p$, corresponding to the $(d,2p,p)$-gYBE. In this case there are just two twist symmetries given by $V_{i_1\cdots i_p}$ and $W_{i_{p+1}\cdots i_{2p}}$. This is precisely the same number of symmetries that appear for the standard YBE case as shown in Lemma \ref{lem:twist-YBE}. This is expected as the $(d,2p,p)$-gYBE, is closest to the standard YBE in terms of the index structure of the equation. It can be obtained from the YBE simply by enlarging each of the indices appearing in the YBE by an equal amount, $p$. Thus we expect several properties of the $(d,2p,p)$-gYBE, especially its symmetries, to be similar to that of the standard YBE.

A simple example is useful to illustrate the structure of the symmetry operators of Theorem \ref{thm:twist-gYBE-1}. Consider the $l=5$ case. Then the appropriate values for $m$ are, $m=3,4$. Then the different twist transformations are given by:
\begin{align*}
     m=4 & ~:~ V_{i_1i_2i_3i_4}, W_{i_5}; V_{i_1i_2i_3}, W_{i_4i_5}; V_{i_1i_2}, W_{i_3i_4i_5}; V_{i_1},W_{i_2i_3i_4i_5} \\
     m=3 & ~:~ V_{i_1i_2i_3}, W_{i_4i_5}; V_{i_1i_2},W_{i_3i_4i_5}.
\end{align*}
The following Theorem accounts for the remaining sets of $m$ values in both the odd and even case.
\begin{theorem}[Twist transformations or the constrained $\mathrm{GL}(d,\mathbb{C})$ gauge symmetries of the gYBE - I - 2]\label{thm:twist-gYBE-2}
    For odd \(l=2p+1\) (even \(l=2p\)), choose $m$ from the set $\{1,\cdots, p\}$ $\left(\{1,\cdots, p-1\}\right)$. Then if $R_{i_1\cdots i_l}$ satisfies the $(d,l,m)$-gYBE and if there exist operators $V_{i_1\cdots i_m}$ and $W_{i_{l-m+1}\cdots i_l}$, that belong to $\mathrm{GL}(d^m,\mathbb{C})$ and satisfy 
    $$ \left[R_{i_1\cdots i_l}, V_{i_1\cdots i_m}W_{i_{l-m+1}\cdots i_l} \right] =0, $$
    then the transformations $$V_{i_1\cdots i_m}R_{i_1\cdots i_l}V_{i_1\cdots i_m}^{-1}~;~W_{i_{l-m+1}\cdots i_l}R_{i_1\cdots i_l}W_{i_{l-m+1}\cdots i_l}^{-1},$$ also satisfy the $(d,l,m)$-gYBE. 
\end{theorem}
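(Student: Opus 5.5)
The plan is to mimic the proof of Lemma \ref{lem:twist-YBE}, using the translated copies of $V$ and $W$ appropriate to the index structure of the $(d,l,m)$-gYBE. Write $R_1:=R_{i_1\cdots i_l}$ and $R_2:=R_{i_{m+1}\cdots i_{m+l}}$. The hypothesis $[R_1,V_{i_1\cdots i_m}W_{i_{l-m+1}\cdots i_l}]=0$ is equivalent to
\begin{equation*}
V_{i_1\cdots i_m}R_1V_{i_1\cdots i_m}^{-1}=W_{i_{l-m+1}\cdots i_l}^{-1}R_1W_{i_{l-m+1}\cdots i_l}.
\end{equation*}
Translating all indices by $m$ gives
\begin{equation*}
V_{i_{m+1}\cdots i_{2m}}R_2V_{i_{m+1}\cdots i_{2m}}^{-1}=W_{i_{l+1}\cdots i_{l+m}}^{-1}R_2W_{i_{l+1}\cdots i_{l+m}}.
\end{equation*}
So the transformed operators are $\tilde R_1=V_{i_1\cdots i_m}R_1V_{i_1\cdots i_m}^{-1}$ and $\tilde R_2=V_{i_{m+1}\cdots i_{2m}}R_2V^{-1}_{i_{m+1}\cdots i_{2m}}=W^{-1}_{i_{l+1}\cdots i_{l+m}}R_2W_{i_{l+1}\cdots i_{l+m}}$. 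Throughout, $V$ and $W$ act on $m$ sites, so the supports $\{i_1,\dots,i_m\}$ and $\{i_{l+1},\dots,i_{l+m}\}$ are exactly the ``outer'' blocks of $\bigotimes_{k=1}^{l+m}\mathbb{C}^d_{i_k}$.

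The key step is a support check with three parts:
\begin{itemize}
\item $V_{i_1\cdots i_m}$ commutes with $R_2$, because $R_2$ lives on $i_{m+1},\dots,i_{m+l}$.
\item $W_{i_{l+1}\cdots i_{l+m}}$ commutes with $R_1$, because $R_1$ lives on $i_1,\dots,i_l$.
\item $V_{i_1\cdots i_m}$ commutes with $W_{i_{l+1}\cdots i_{l+m}}$, because $m<l+1$ makes their supports disjoint.
\end{itemize}

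With these facts the computation is identical to Lemma \ref{lem:twist-YBE}. For the left-hand side,
\begin{equation*}
\tilde R_1\tilde R_2\tilde R_1=V_{i_1\cdots i_m}R_1V^{-1}_{i_1\cdots i_m}W^{-1}_{i_{l+1}\cdots i_{l+m}}R_2W_{i_{l+1}\cdots i_{l+m}}V_{i_1\cdots i_m}R_1V^{-1}_{i_1\cdots i_m}.
\end{equation*}
Moving the $V$'s and $W$'s outward gives $\left(V_{i_1\cdots i_m}W^{-1}_{i_{l+1}\cdots i_{l+m}}\right)R_1R_2R_1\left(W_{i_{l+1}\cdots i_{l+m}}V^{-1}_{i_1\cdots i_m}\right)$. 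The right-hand side $\tilde R_2\tilde R_1\tilde R_2$ becomes $R_2R_1R_2$ conjugated by the same factor. The gYBE for $R$ then gives the gYBE for $\tilde R$.

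The $W$-transformation is handled symmetrically. Rewrite $\tilde R_1=W_{i_{l-m+1}\cdots i_l}R_1W^{-1}_{i_{l-m+1}\cdots i_l}$ as $V^{-1}_{i_1\cdots i_m}R_1V_{i_1\cdots i_m}$. Keep $\tilde R_2=W_{i_{l+1}\cdots i_{l+m}}R_2W^{-1}_{i_{l+1}\cdots i_{l+m}}$ in its given form. Both sides are then conjugated by $V^{-1}_{i_1\cdots i_m}W_{i_{l+1}\cdots i_{l+m}}$.

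I expect the main point requiring care to be the bookkeeping: when rewriting a transformed $R$, one must choose the representative whose conjugating operator sits on the outer block. For $R_1$ this is the first $m$ sites. For $R_2$ it is the last $m$ sites, and it must not overlap sites shared by $R_1$ and $R_2$. This choice is exactly why $V$ and $W$ are taken on the first and last $m$ sites. The range $m\le p$ (respectively $m\le p-1$) gives $m<l-m$, so these two $m$-site blocks inside $R$ do not overlap. This distinguishes the situation from Theorem \ref{thm:twist-gYBE-1} and keeps the shifted supports cleanly separated.
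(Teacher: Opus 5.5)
Your proposal is correct and follows essentially the same route as the paper. You rewrite the translated transformed $R$ as $W^{-1}_{i_{l+1}\cdots i_{l+m}}R_{i_{m+1}\cdots i_{m+l}}W_{i_{l+1}\cdots i_{l+m}}$ and move the disjointly supported twist operators outward, so that both sides become conjugated by $V_{i_1\cdots i_m}W^{-1}_{i_{l+1}\cdots i_{l+m}}$, exactly as in the paper's computation. Your support bookkeeping also spells out something the paper leaves implicit: $V$ and $W$ commute inside $R$ because $m<l-m$, and this is what makes the commutation hypothesis equivalent to $VRV^{-1}=W^{-1}RW$.
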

\begin{proof}
The proof follows from a direct computation. Consider the l.h.s of the gYBE with $R$ transformed by $V$:
    \begin{align*}
        & \tilde{R}_{i_1\cdots i_l}\tilde{R}_{i_{1+m}\cdots i_{l+m}}\tilde{R}_{i_1\cdots i_l} \\
        = & V_{i_1\cdots i_m}R_{i_1\cdots i_l}V_{i_1\cdots i_m}^{-1} V_{i_{1+m}\cdots i_{2m}}R_{i_{1+m}\cdots i_{l+m}}V_{i_{1+m}\cdots i_{2m}}^{-1}V_{i_1\cdots i_m}R_{i_1\cdots i_l}V_{i_1\cdots i_m}^{-1} \\
        = & V_{i_1\cdots i_m}R_{i_1\cdots i_l}V_{i_1\cdots i_m}^{-1} W_{i_{l+1}\cdots i_{l+m}}^{-1}R_{i_{1+m}\cdots i_{l+m}}W_{i_{l+1}\cdots i_{l+m}}V_{i_1\cdots i_m}R_{i_1\cdots i_l}V_{i_1\cdots i_m}^{-1} \\
        = & V_{i_1\cdots i_m}W_{i_{l+1}\cdots i_{l+m}}^{-1}\left(R_{i_1\cdots i_l}R_{i_{m+1}\cdots i_{l+m}} R_{i_1\cdots i_l}\right)W_{i_{l+1}\cdots i_{l+m}} V_{i_1\cdots i_m}^{-1}.
    \end{align*}
    The r.h.s of the gYBE can similarly be simplified to 
    \begin{align*}
       &  \tilde{R}_{i_{1+m}\cdots i_{l+m}}\tilde{R}_{i_1\cdots i_l}\tilde{R}_{i_{1+m}\cdots i_{l+m}}  \\
      % = & V_{i_{1+m}\cdots i_{2m}}R_{i_{1+m}\cdots i_{l+m}}V_{i_{1+m}\cdots i_{2m}}^{-1}V_{i_1\cdots i_m}R_{i_1\cdots i_l}V_{i_1\cdots i_m}^{-1}V_{i_{1+m}\cdots i_{2m}}R_{i_{1+m}\cdots i_{l+m}}V_{i_{1+m}\cdots i_{2m}}^{-1} \\
      % = & W_{i_{l+1}\cdots i_{l+m}}^{-1}R_{i_{1+m}\cdots i_{l+m}}W_{i_{l+1}\cdots i_{l+m}}V_{i_1\cdots i_m}R_{i_1\cdots i_l}V_{i_1\cdots i_m}^{-1}W_{i_{l+1}\cdots i_{l+m}}^{-1}R_{i_{1+m}\cdots i_{l+m}}W_{i_{l+1}\cdots i_{l+m}} \\
       = & V_{i_1\cdots i_m}W_{i_{l+1}\cdots i_{l+m}}^{-1}\left(R_{i_{1+m}\cdots i_{l+m}}R_{i_{1}\cdots i_{l}} R_{i_{1+m}\cdots i_{l+m}}\right)W_{i_{l+1}\cdots i_{l+m}} V_{i_1\cdots i_m}^{-1},
    \end{align*}
    thus proving that the transformed $R$ satisfies the $(d,l,m)$-gYBE.
    A similar set of computations go through for the $R$ transformed by $W$ as well. 
\end{proof}
Thus for the gYBEs with these values of $m$, there are just two twist symmetries for any $m$, in contrast to those discussed in Theorem \ref{thm:twist-gYBE-1}. Consider two simple examples that clarify this Theorem. For $l=5$, we have $m=1,2$. The twist symmetries are then given by $V_{i_1}$, $W_{i_5}$ and $V_{i_1i_2}$, $W_{i_4i_5}$ respectively. Consider the even $l$ example of $l=6$, with $m=1,2$. The twist symmetries are then given by $V_{i_1}$, $W_{i_6}$ and $V_{i_1i_2}$, $W_{i_5i_6}$, respectively. 

Next we identify the twist symmetries analogous to those constructed in Lemma \ref{lem:twist-YBE-2} for the gYBE case. These symmetries will depend on the two sets of values of $m$ as elaborated above and thus can be formulated in the following Theorems.
%\begin{lemma}\label{lem:twist-gYBE-3} \textcolor{blue}{It is included as a special case of Theorem 2.7.}
 %    For $l$ odd (even), choose the value of $m$ from the sets $\{p+1,\cdots, 2p\}$ $\left(\{p,\cdots, 2p-1\}\right)$. Let $r=l-m>0$. Consider invertible and entangled operators $V_{i_1\cdots i_r}$ and $W_{i_{l-r+1}\cdots i_l}$ satisfying the conditions
  %   \begin{align*}
   %      \left [R_{i_1\cdots i_l}, V_{i_1\cdots i_r}V_{i_{l-r+1}\cdots i_l} \right] =0~;~\left [R_{i_1\cdots i_l}, W_{i_1\cdots i_r}W_{i_{l-r+1}\cdots i_l} \right] =0.
    % \end{align*}
    % Then if $R$ satisfies the $(d,l,m)$-gYBE, the transformed $R$-matrix
    % \begin{align*}
     %    \tilde{R}_{i_1\cdots i_l} = V_{i_1\cdots i_r}W_{i_{l-r+1}\cdots i_l}R_{i_1\cdots i_l}V^{-1}_{i_1\cdots i_r}W^{-1}_{i_{l-r+1}\cdots i_l},
     %\end{align*}
     %also satisfies the $(d,l,m)$-gYBE.
%\end{lemma}
%\begin{proof}
 %   The result immediately follows from noticing that for each of the $(d,l,m)$-gYBEs with $m$ chosen from these two sets, a $(d,2r,r)$-gYBE is embedded in the first and last $r=l-m$ indices of $R$. Then applying Lemma \ref{lem:twist-YBE-2} on these indices, with the entangled $V$ and $W$ operators, gives us the desired result. 
%\end{proof}
%$V^{(\chi)}_{i_1\cdots i_{\chi}}$, $T^{(\chi)}_{i_1\cdots i_{\chi}}$, $W^{(\chi)}_{i_{1+\chi}\cdots i_l}$, $U^{(\chi)}_{i_{1+\chi}\cdots i_l}$

\begin{theorem}[Twist transformations or the constrained $\mathrm{GL}(d,\mathbb{C})$ gauge symmetries of the gYBE - II - 1]\label{thm:twist-gYBE-4}
     For $l$ odd (even), choose the value of $m$ from the sets $\{p+1,\cdots, 2p\}$ $\left(\{p,\cdots, 2p-1\}\right)$ and let $r=l-m$. For $\chi\in \{m,r\}$, let $V^{(\chi)}$, $T^{(\chi)}$, $W^{(\chi)}$, $U^{(\chi)}$ be invertible entangled operators acting on the indicated tensor blocks. If $R_{i_1\cdots i_l}$ satisfies the $(d,l,m)$-gYBE and if the commutation relations
\begin{align*}
\left[
R_{i_1\cdots i_l}, V^{(\chi)}_{i_1\cdots i_{\chi}}U^{(\chi)}_{i_{1+\chi}\cdots i_l}
\right]= \left[ R_{i_1\cdots i_l},T^{(\chi)}_{i_1\cdots i_{\chi}}W^{(\chi)}_{i_{1+\chi}\cdots i_l}
\right] & =0, \qquad \chi\in \{m,r\}
%\\
%\left[
%R_{i_1\cdots i_l}, V_{i_1\cdots i_r}U_{i_{1+r}\cdots i_l}
%\right]= \left[ R_{i_1\cdots i_l},T_{i_1\cdots i_r}W_{i_{1+r}\cdots i_l}
%\right] & =0,
\end{align*}
hold, then, for each $\chi \in\{m,r\}$, the transformed $R$-matrices
\begin{align*}
\tilde{R}^{(\chi,1)}_{i_1\cdots i_l} & = 
V^{(\chi)}_{i_1\cdots i_{\chi}} W^{(\chi)}_{i_{1+\chi}\cdots i_l} R_{i_1\cdots i_l}
\left(V^{(\chi)}_{i_1\cdots i_{\chi}}\right)^{-1} \left(W^{(\chi)}_{i_{1+\chi}\cdots i_l}\right)^{-1}, \\
\tilde{R}^{(\chi,2)}_{i_1\cdots i_l} & = 
T^{(\chi)}_{i_1\cdots i_{\chi}} U^{(\chi)}_{i_{1+\chi}\cdots i_l} R_{i_1\cdots i_l}
\left(T^{(\chi)}_{i_1\cdots i_{\chi}}\right)^{-1} \left(U^{(\chi)}_{i_{1+\chi}\cdots i_l}\right)^{-1}
\end{align*}
are also solutions of the $(d,l,m)$-generalized Yang-Baxter equation. Here, $\chi=m$ and $\chi=r=(l-m)$ correspond respectively to the two different tensor block decompositions $m+(l-m)$ and $r+(l-r)$. This implies that the operators used for the $m$- and $r$-block decompositions are not being identified with one another.
\end{theorem}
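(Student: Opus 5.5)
The plan is to repeat the computation in the proof of Lemma \ref{lem:twist-YBE-2}, with single sites replaced by tensor blocks, in the same way that Theorem \ref{thm:twist-gYBE-1} was obtained from Lemma \ref{lem:twist-YBE}. Fix $\chi\in\{m,r\}$ and split the $l$ indices of $R_{i_1\cdots i_l}$ into a left block $A=\{i_1,\dots,i_\chi\}$ and a right block $B=\{i_{1+\chi},\dots,i_l\}$. For the translated operator $R_{i_{1+m}\cdots i_{l+m}}$ use the translated blocks $A'=\{i_{1+m},\dots,i_{\chi+m}\}$ and $B'=\{i_{1+\chi+m},\dots,i_{l+m}\}$.

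The two hypotheses $[R,V^{(\chi)}_AU^{(\chi)}_B]=[R,T^{(\chi)}_AW^{(\chi)}_B]=0$ give four exchange rules, each moving a partial conjugation from one block to the other:
\begin{align*}
V_ARV_A^{-1}&=U_B^{-1}RU_B, & T_ARV_A^{-1}&\ \text{(not needed)},\\
W_BRW_B^{-1}&=T_A^{-1}RT_A, & U_BRU_B^{-1}&=V_A^{-1}RV_A .
\end{align*}
The same rules hold for the translated $R$ on $A'\cup B'$.

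For $\tilde R^{(\chi,1)}=Q R Q^{-1}$ with $Q=V_AW_B$, I would proceed as in the case $Q_{ij}=V_iW_j$ of Lemma \ref{lem:twist-YBE-2}:
\begin{enumerate}
\item Expand $\tilde R\tilde R'\tilde R$.
\item In each copy of $R$, trade the $W_B$-conjugation for a $T_A^{-1}$-conjugation.
\item In each copy of $R'$, trade the $V_{A'}$-conjugation for a $U_{B'}^{-1}$-conjugation.
\end{enumerate}
After these steps, every surviving conjugator is supported either on $A$ (the sites $i_1,\dots,i_\chi$) or on $B'$ (the final sites up to $i_{l+m}$). The aim is to show that these conjugators commute past the neighbouring factor, so that both sides of the gYBE take the form $\Omega\,(\text{gYBE})\,\Omega^{-1}$. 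The common conjugator $\Omega$ should be the analogue of $V_iT_i^{-1}W_kU_k^{-1}$, that is, a product of an operator on $A$ and an operator on $B'$. The case $\tilde R^{(\chi,2)}$ with $Q=T_AU_B$ is symmetric: the roles of the two commutation relations are interchanged.

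\textbf{Main obstacle: support bookkeeping.} The step above works only if $A$ is disjoint from the support $\{i_{1+m},\dots,i_{l+m}\}$ of $R'$, and $B'$ is disjoint from the support $\{i_1,\dots,i_l\}$ of $R$.
\begin{itemize}
\item $A\cap\mathrm{supp}(R')=\emptyset$ requires $\chi\le m$. This holds for both $\chi=m$ and $\chi=r$, because the hypothesis on $m$ (upper half of the range) gives $r\le m$.
\item $B'\cap\mathrm{supp}(R)=\emptyset$ requires $1+\chi+m>l$, i.e.\ $\chi\ge r$. This again holds for both choices of $\chi$.
\end{itemize}
This is precisely why the theorem restricts to $\chi\in\{m,r\}$ and to the upper range of $m$.

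\textbf{Second difficulty: the middle blocks.} When $\chi=r<m$, the blocks $B$ and $A'$ do not coincide with the overlap $\{i_{1+m},\dots,i_l\}$. In that case I would check carefully that the conjugators acting on $B$ from the first and third factors meet the middle factor only after they have been converted via the exchange rules. If a leftover piece on $B\cap A'$ remains, my first attempt would be to pair it with its inverse coming from the adjacent $\tilde R$, since $Q^{-1}Q'$ telescopes on the overlapping sites. Should that fail, I would restrict to the embedded $(d,2r,r)$ structure on the first and last $r$ indices, as in the commented remark, and invoke Lemma \ref{lem:twist-YBE-2} directly.
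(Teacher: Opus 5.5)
Your proposal is correct and is the paper's proof. For $\chi=m$ the paper makes exactly your two substitutions, $W_BRW_B^{-1}=T_A^{-1}RT_A$ in each copy of $R$ and $V_{A'}R'V_{A'}^{-1}=U_{B'}^{-1}R'U_{B'}$ in $R'$, and then pulls out the common conjugator $V_AT_A^{-1}W_{B'}U_{B'}^{-1}$; your support conditions $r\le\chi\le m$ make explicit why $\chi=r$ (indeed any $\chi$ in that range) works verbatim.

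Two small corrections. Your ``second difficulty'' is moot: after those substitutions $\tilde R$ and $\tilde R'$ are conjugates of $R$ and $R'$ by operators supported only on $A$ and on $B'$ respectively, so nothing is left on $B\cap A'$ and no fallback is needed. Also, your garbled fourth exchange rule should read $T_ART_A^{-1}=W_B^{-1}RW_B$, and it is not ``not needed'': it is precisely the rule required for $\tilde R^{(\chi,2)}$.
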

\begin{proof}
For simplicity, we give the proof only for the case $\chi=m$. The other cases follow in the same way. Throughout this proof, we use the following simplified notation:
\begin{equation*}
V^{(m)}=V,\quad T^{(m)}=T,\quad W^{(m)}=W,\quad U^{(m)}=U.
\end{equation*}
We will begin with the l.h.s. of the transformed gYBE, which can be simplified through the following steps
\begin{align*}
    & \tilde{R}^{(m,1)}_{i_1\cdots i_l}\tilde{R}^{(m,1)}_{i_{1+m}\cdots i_{l+m}}\tilde{R}^{(m,1)}_{i_1\cdots i_l} \\
    = & V_{i_1\cdots i_m} W_{i_{1+m}\cdots i_l} R_{i_1\cdots i_l}
V_{i_1\cdots i_m}^{-1} W_{i_{1+m}\cdots i_l}^{-1} V_{i_{1+m}\cdots i_{2m}} W_{i_{1+2m}\cdots i_{l+m}} R_{i_{1+m}\cdots i_{l+m}}\\& \qquad
V_{i_{1+m}\cdots i_{2m}}^{-1} W_{i_{1+2m}\cdots i_{l+m}}^{-1}  V_{i_1\cdots i_m} W_{i_{1+m}\cdots i_l} R_{i_1\cdots i_l}
V_{i_1\cdots i_m}^{-1} W_{i_{1+m}\cdots i_l}^{-1}\\
= & V_{i_1\cdots i_m} T_{i_1\cdots i_m}^{-1} R_{i_1\cdots i_l} T_{i_1\cdots i_m} W_{i_{1+2m}\cdots i_{l+m}} U_{i_{1+2m}\cdots i_{l+m}}^{-1} R_{i_{1+m}\cdots i_{l+m}}\\& \qquad
U_{i_{1+2m}\cdots i_{l+m}}  W_{i_{1+2m}\cdots i_{l+m}}^{-1} T_{i_1\cdots i_m}^{-1}R_{i_1\cdots i_l}T_{i_1\cdots i_m}V_{i_1\cdots i_m}^{-1}\\
= & V_{i_1\cdots i_m} T_{i_1\cdots i_m}^{-1} W_{i_{1+2m}\cdots i_{l+m}} U_{i_{1+2m}\cdots i_{l+m}}^{-1} \left(R_{i_1\cdots i_l}  R_{i_{1+m}\cdots i_{l+m}} R_{i_1\cdots i_l}\right) \\& \qquad \qquad U_{i_{1+2m}\cdots i_{l+m}} W_{i_{1+2m}\cdots i_{l+m}}^{-1} T_{i_1\cdots i_m} V_{i_1\cdots i_m}^{-1}.
\end{align*}
    Similarly, the r.h.s. of the gYBE simplifies to 
    \begin{align*}
    & \tilde{R}^{(m,1)}_{i_{1+m}\cdots i_{l+m}} \tilde{R}^{(m,1)}_{i_1\cdots i_l} \tilde{R}^{(m,1)}_{i_{1+m}\cdots i_{l+m}}\\
        = & V_{i_1\cdots i_m} T_{i_1\cdots i_m}^{-1} W_{i_{1+2m}\cdots i_{l+m}} U_{i_{1+2m}\cdots i_{l+m}}^{-1} \left(R_{i_{1+m}\cdots i_{l+m}} R_{i_1\cdots i_l} R_{i_{1+m}\cdots i_{l+m}}\right) \\& \qquad \qquad U_{i_{1+2m}\cdots i_{l+m}} W_{i_{1+2m}\cdots i_{l+m}}^{-1} T_{i_1\cdots i_m} V_{i_1\cdots i_m}^{-1}.
    \end{align*}
    This completes the proof. 
\end{proof}
\begin{remark}
In particular, suppose that the operators $U^{(r)}_{i_{1+r}\cdots i_l}$ and $W^{(r)}_{i_{1+r}\cdots i_l}$ admit the decompositions
\begin{eqnarray*}
U^{(r)}_{i_{1+r}\cdots i_l} = \widetilde{U}^{(r)}_{i_{1+r}\cdots i_{l-r}} V^{(r)}_{i_{l-r+1}\cdots i_l}, \\
W^{(r)}_{i_{1+r}\cdots i_l} = \widetilde{W}^{(r)}_{i_{1+r}\cdots i_{l-r}}
T^{(r)}_{i_{l-r+1}\cdots i_l},
\end{eqnarray*}
where $\widetilde{U}^{(r)}_{i_{1+r}\cdots i_{l-r}}$ and $\widetilde{W}^{(r)}_{i_{1+r}\cdots i_{l-r}}$ are invertible entangled
operators acting on the intermediate tensor block $i_{1+r},\ldots,i_{l-r}$.
A special case of the above theorem is obtained by setting these
auxiliary operators to the identity,
\begin{eqnarray*}
\widetilde{U}^{(r)}_{i_{1+r}\cdots i_{l-r}}= \widetilde{W}^{(r)}_{i_{1+r}\cdots i_{l-r}} = \mathbf{I}.
\end{eqnarray*}
Therefore, the commutation conditions of the theorem reduce to
\begin{eqnarray*}
\left[ R_{i_1\cdots i_l}, V^{(r)}_{i_1\cdots i_r}V^{(r)}_{i_{l-r+1}\cdots i_l} \right] =
\left[ R_{i_1\cdots i_l}, T^{(r)}_{i_1\cdots i_r}T^{(r)}_{i_{l-r+1}\cdots i_l} \right]
= 0.
\end{eqnarray*}
Consequently, if $R_{i_1\cdots i_l}$ satisfies the $(d,l,m)$-gYBE, then the transformed $R$-matrix
\begin{eqnarray*}
\tilde{R}^{(r,1)}_{i_1\cdots i_l} = V^{(r)}_{i_1\cdots i_r} T^{(r)}_{i_{l-r+1}\cdots i_l}
R_{i_1\cdots i_l} \left(V^{(r)}_{i_1\cdots i_r}\right)^{-1} \left(T^{(r)}_{i_{l-r+1}\cdots i_l}\right)^{-1}
\end{eqnarray*}
also satisfies the $(d,l,m)$-gYBE. It is important to notice that for each of the $(d,l,m)$-gYBEs with $m$ chosen from these two sets, a $(d,2r,r)$-gYBE is embedded in the first and last $r=l-m$ indices of $R$.
\end{remark}

\begin{remark}
    The preceding theorem admits a particularly simple specialization when the twisting operators are local. In this case, there is no need to introduce the additional operators $U^{(m)}_{i_{1+m}\cdots i_l}$ and $T^{(m)}_{i_1\cdots i_m}$ explicitly. Instead, it is sufficient to verify the commutation of $R_{i_1\cdots i_l}$ with the global tensor products of the local operators.
    Consider
\begin{equation*}
V^{(m)}_{i_1\cdots i_m} =\bigotimes_{k=1}^{m}v_{i_k}~,\qquad
W^{(m)}_{i_{1+m}\cdots i_l} =\bigotimes_{k=1+m}^{l}w_{i_k},
\end{equation*}
where $v_{i_k},w_{i_k}\in \mathrm{GL}(d)$, $k=1,\ldots,l$, are invertible local operators. Assume that
\begin{equation*}
\left[R_{i_1\cdots i_l},\bigotimes_{k=1}^{l}v_{i_k}\right]=0~,\qquad
\left[R_{i_1\cdots i_l},\bigotimes_{k=1}^{l}w_{i_k}\right]=0.
\end{equation*}
Then the transformed $R$-matrix
\begin{equation*}
\tilde{R}^{(m,1)}_{i_1\cdots i_l} = \left(\bigotimes_{k=1}^{m}v_{i_k}\right)
\left(\bigotimes_{k=1+m}^{l}w_{i_k}\right) R_{i_1\cdots i_l}
\left(\bigotimes_{k=1}^{m}v_{i_k}\right)^{-1}
\left(\bigotimes_{k=1+m}^{l}w_{i_k}\right)^{-1}
\end{equation*}
is again a solution of the $(d,l,m)$-gYBE. Indeed, this is an immediate consequence of the preceding theorem by choosing 
\begin{equation*}
U^{(m)}_{i_{1+m}\cdots i_l} =\bigotimes_{k=1+m}^{l}v_{i_k}~,\qquad
T^{(m)}_{i_1\cdots i_m} =\bigotimes_{k=1}^{m}w_{i_k}.
\end{equation*} 
\end{remark}
For values of $m$ outside the specified sets, twist transformations generated by these entangled operators cannot be constructed. The reason is that the operators
\[
V^{(m)}_{i_{1+m}\cdots i_{l-m}}
  (T^{(m)}_{i_{1+m}\cdots i_{l-m}})^{-1}
\quad\text{and}\quad
T^{(m)}_{i_{1+m}\cdots i_{l-m}}
  (V^{(m)}_{i_{1+m}\cdots i_{l-m}})^{-1}
\]
remain trapped between the $R$-matrices, and in general cannot be moved through them to either side of the gYBE. We next analyze the case in which $m \leq \lfloor \frac{l}{2} \rfloor$ for odd $l$, and $m <  \frac{l}{2}$ for even $l$.
\begin{theorem}[Twist transformations of the gYBE-II-2] \label{thm:twist-gYBE-5}
    Let $R_{1-l}$ be a solution to the $(d,l,m)$-gYBE and acting on $l$ tensor indices. In this case, the admissible values of $m$ are selected as follows:
\begin{eqnarray*}
    && m\in \{1,\ldots,p\}, \quad \text{if } l \text{ is odd}~ (l=2p+1),\nonumber\\
    && m\in \{1,\ldots,p-1\}, \quad \text{if } l \text{ is even}~(l=2p).
\end{eqnarray*}
Let $V_{i_1\cdots i_m}$ and $W_{i_{(l-m)+1}\cdots i_l}$ be invertible operators acting respectively on the left and right boundary blocks consisting of $m$ consecutive indices. Then the transformed $R$-matrix
\begin{equation*}
    \tilde{R}_{i_1\cdots i_l}= V_{i_1\cdots i_m}W_{i_{(l-m)+1}\cdots i_l}~ R_{i_1\cdots i_l}~ V_{i_1\cdots i_m}^{-1}W_{i_{(l-m)+1}\cdots i_l}^{-1},
\end{equation*}
defines a valid twist transformation provided that
\begin{equation*}
    \left[R_{i_1\cdots i_l},  V_{i_1\cdots i_m}V_{i_{(l-m)+1}\cdots i_l}\right] = 0~;~\left[R_{i_1\cdots i_l},  W_{i_1\cdots i_m}W_{i_{(l-m)+1}\cdots i_l}\right]=0.
\end{equation*}
The operators $V_{i_1\cdots i_m}$ and $W_{i_{(l-m)+1}\cdots i_l}$ act exclusively on two boundary blocks of equal size $m$, located at the left and right ends of the $R$-matrices. The central block containing $(l-2m)$ indices is left untouched by the transformation. Indeed, in the regime $(l-3m)>0$, one observes that the $(d,l-2m,m)$-gYBE structure is directly embedded within the central block of the larger $(d,l,m)$-gYBE and remains completely unaffected by the twist elements. Consequently, the twist is localized at the boundaries while preserving the structure of the intermediate sector.
\end{theorem}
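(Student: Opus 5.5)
The plan is to reduce the statement to Theorem~\ref{thm:twist-gYBE-2}. We rewrite the two-sided boundary conjugation as a one-sided conjugation by a single combined operator. Throughout, let $A=\{i_1,\dots,i_m\}$ be the left boundary block and $B=\{i_{l-m+1},\dots,i_l\}$ the right one. In the admissible range of $m$ we have $l\geq 2m+1$, so $A$ and $B$ are disjoint. Hence $V_A$ and $W_B$ act on disjoint tensor factors and commute. This disjointness is the only place where the restriction on $m$ enters.

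\emph{Step 1.} Rewrite the two commutation hypotheses as exchange relations:
\begin{equation*}
V_A R_{i_1\cdots i_l} V_A^{-1} = V_B^{-1} R_{i_1\cdots i_l} V_B, \qquad W_B R_{i_1\cdots i_l} W_B^{-1} = W_A^{-1} R_{i_1\cdots i_l} W_A .
\end{equation*}

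\emph{Step 2.} Use these relations to move every twist to one end. Since $V_A$ and $W_B$ commute,
\begin{equation*}
\tilde{R}_{i_1\cdots i_l} = V_A\left(W_B R W_B^{-1}\right)V_A^{-1} = V_A W_A^{-1} R\, W_A V_A^{-1} = X_A R X_A^{-1},
\end{equation*}
where $X:=VW^{-1}\in\mathrm{GL}(d^m,\mathbb{C})$. Here $V_A$ and $W_A^{-1}$ both live on $A$, and we simply define their product to be $X_A$. Symmetrically, applying the first relation inside $W_B(\cdot)W_B^{-1}$ gives
\begin{equation*}
\tilde{R}_{i_1\cdots i_l} = W_B V_B^{-1} R\, V_B W_B^{-1} = X_B^{-1} R X_B .
\end{equation*}
Comparing the two expressions gives $X_A R X_A^{-1} = X_B^{-1} R X_B$, that is, $[R_{i_1\cdots i_l}, X_A X_B]=0$.

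\emph{Step 3.} Conclude from Theorem~\ref{thm:twist-gYBE-2}. Its hypothesis now holds with both of its operators ($V_{i_1\cdots i_m}$ and $W_{i_{l-m+1}\cdots i_l}$ there) taken to be $X$, and $\tilde{R}$ is exactly the $V$-type transform $X_A R X_A^{-1}$ appearing there. That theorem therefore shows $\tilde R$ solves the $(d,l,m)$-gYBE. The resulting computation is short enough to record directly as a check. In the products $\tilde R_{1\cdots l}\tilde R_{1+m\cdots l+m}\tilde R_{1\cdots l}$ and $\tilde R_{1+m\cdots l+m}\tilde R_{1\cdots l}\tilde R_{1+m\cdots l+m}$, use the form $X_{[1..m]}R X_{[1..m]}^{-1}$ for the unshifted factors and $X_{[l+1..l+m]}^{-1}R X_{[l+1..l+m]}$ for the shifted ones. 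Then $X_{[1..m]}$ commutes with $R_{i_{1+m}\cdots i_{l+m}}$ and $X_{[l+1..l+m]}$ commutes with $R_{i_1\cdots i_l}$. Both sides become conjugates of the original gYBE by $X_{[1..m]}X^{-1}_{[l+1..l+m]}$.

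The main obstacle is to avoid working naively with the shifted matrices. For $2m<l<3m$, the block $\{i_{m+1},\dots,i_{2m}\}$ carrying $V$ in the shifted $\tilde R$ overlaps the block $B$ carrying $W$ in the unshifted $\tilde R$. These operators then do not commute, so a direct expansion would leave operators trapped between the $R$'s. Step 2 sidesteps this because only the outermost blocks $[1..m]$ and $[l+1..l+m]$ ever appear. Finally, when $l-3m>0$ the middle indices are untouched by $X$, which matches the embedded $(d,l-2m,m)$ structure noted in the statement.
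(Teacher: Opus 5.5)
Your proof is correct, and in substance it is the paper's own argument. The computation you record as a check in Step~3 is exactly what the paper does. It uses the two exchange relations to rewrite each unshifted factor as $V_{i_1\cdots i_m}W^{-1}_{i_1\cdots i_m}R_{i_1\cdots i_l}W_{i_1\cdots i_m}V^{-1}_{i_1\cdots i_m}$ and the shifted one as $W_{i_{l+1}\cdots i_{l+m}}V^{-1}_{i_{l+1}\cdots i_{l+m}}R_{i_{1+m}\cdots i_{l+m}}V_{i_{l+1}\cdots i_{l+m}}W^{-1}_{i_{l+1}\cdots i_{l+m}}$. It then pushes these outer-block operators out. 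Both sides end up conjugated by $W_{i_{l+1}\cdots i_{l+m}}V^{-1}_{i_{l+1}\cdots i_{l+m}}V_{i_1\cdots i_m}W^{-1}_{i_1\cdots i_m}$, which is your $X^{-1}_{[l+1..l+m]}X_{[1..m]}$.

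What your framing adds is the explicit identification $\tilde R=X_A R\,X_A^{-1}$ with $X=VW^{-1}$ and $[R,X_AX_B]=0$. This shows that the two-sided boundary twist is just the one-sided twist of Theorem~\ref{thm:twist-gYBE-2} with both of its operators equal to $X$. So the present theorem generates no transformations beyond that family. The paper's direct expansion is self-contained, but it does not make this visible.

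Your remark that only the disjointness of the two boundary blocks is used is also right: the direct check goes through unchanged whenever $2m\le l$.
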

\begin{proof}
We begin with the twisted version of the gYBE, defined by the twist elements $V$ and $W$. The l.h.s. then simplifies as follows: 
    \begin{align*}
    &\tilde{R}_{i_1\cdots i_l} \tilde{R}_{i_{1+m}\cdots i_{l+m}} \tilde{R}_{i_1\cdots i_l}\\
    =& V_{i_1\cdots i_m}W_{i_{(l-m)+1}\cdots i_l}R_{i_1\cdots i_l}V_{i_1\cdots i_m}^{-1}W_{i_{(l-m)+1}\cdots i_l}^{-1} V_{i_{1+m}\cdots i_{2m}}W_{i_{l+1}\cdots i_{l+m}}R_{i_{1+m}\cdots i_{l+m}}\\
    & \qquad V_{i_{1+m}\cdots i_{2m}}^{-1}W_{i_{l+1}\cdots i_{l+m}}^{-1}V_{i_1\cdots i_m}W_{i_{(l-m)+1}\cdots i_l}R_{i_1\cdots i_l}V_{i_1\cdots i_m}^{-1}W_{i_{(l-m)+1}\cdots i_l}^{-1}\\
    = & W_{i_{l+1}\cdots i_{l+m}}V_{i_1\cdots i_m}W_{i_1\cdots i_m}^{-1} R_{i_1\cdots i_l}W_{i_1\cdots i_m}V_{i_{l+1}\cdots i_{l+m}}^{-1} R_{i_{1+m}\cdots i_{l+m}}V_{i_{l+1}\cdots i_{l+m}}\\
    & \qquad \qquad W_{i_1\cdots i_m}^{-1}R_{i_1\cdots i_l}W_{i_1\cdots i_m}V_{i_1\cdots i_m}^{-1}W_{i_{l+1}\cdots i_{l+m}}^{-1}\\
    =& W_{i_{l+1}\cdots i_{l+m}} V_{i_{l+1}\cdots i_{l+m}}^{-1}V_{i_1\cdots i_m}W_{i_1\cdots i_m}^{-1} (R_{i_1\cdots i_l}R_{i_{1+m}\cdots i_{l+m}}R_{i_1\cdots i_l})\\
    & \qquad \qquad W_{i_1\cdots i_m}V_{i_1\cdots i_m}^{-1}V_{i_{l+1}\cdots i_{l+m}} W_{i_{l+1}\cdots i_{l+m}}^{-1}.
    \end{align*}
    The r.h.s. will similarly be simplified to 
    \begin{align*}
        &\tilde{R}_{i_{1+m}\cdots i_{l+m}} \tilde{R}_{i_1\cdots i_l} \tilde{R}_{i_{1+m}\cdots i_{l+m}}\\
        =& W_{i_{l+1}\cdots i_{l+m}} V_{i_{l+1}\cdots i_{l+m}}^{-1}V_{i_1\cdots i_m}W_{i_1\cdots i_m}^{-1}(R_{i_{1+m}\cdots i_{l+m}} R_{i_1\cdots i_l} R_{i_{1+m}\cdots i_{l+m}})\\
        & \qquad \qquad W_{i_1\cdots i_m}V_{i_1\cdots i_m}^{-1}V_{i_{l+1}\cdots i_{l+m}} W_{i_{l+1}\cdots i_{l+m}}^{-1}.
    \end{align*}
    Therefore, the twisted $R$-matrix $\tilde{R}$ is also a solution of the gYBE, which completes the proof.
\end{proof}
From the above Theorems we see that $m=\left\lfloor\frac{l}{2}\right\rfloor$ determines the boundary value of $m$ at which the nature of the twist or constrained gauge symmetries change. These Theorems account for twist or constrained gauge symmetries of the $(d,l,m)$-gYBE for all values of $l$ and their allowed $m$ values. 

The fourth and final set of continuous symmetries of the YBE are generated by a set of braid generators, $P_{ij}V_{ij}$, where $P$ is the standard permutation or exchange operator on $\mathbb{C}^d\otimes\mathbb{C}^d$ and $V$ is an invertible operator on the same space. Further restrictions on $V$ lead to a YBE preserving transformation $VRV^{-1}$, when $R$ solves the YBE. These symmetries are known as {\it generalized twists}. We will use the same terminology when addressing these operators. They were first introduced in \cite{G_hmann_1997,Kundu_2001}. Just as the twist symmetries studied earlier, the generalized twist symmetries too make an appearance in the context of the Fermi-Hubbard model as discussed in Chapter 12, Lemma 12 of \cite{esslerfrahmgohmannklumperkorepin2005}. As $V$ is an invertible operator in $\mathrm{GL}(d^2,\mathbb{C}) $, but is constrained, these operators are part of the family of constrained gauge symmetries of the YBE. We will recall the generalized twist symmetry with the following Lemma.
\begin{lemma}[Generalized twist symmetries of the YBE]\label{lem:gen-twist-YBE}
  Let $P_{ij}V_{ij}$ satisfy the YBE. This implies that $ V_{ij}V_{ik}V_{jk} = V_{jk}V_{ik}V_{ij}.$ 
  Furthermore, if $R$ solves the YBE, and satisfies the constraints 
  $\left[R_{ij}, V_{ik}V_{jk}\right] = 0$ and $\left[R_{jk}, V_{ik}V_{ij} \right] = 0$, then $V_{ij}R_{ij}V_{ij}^{-1}$ also solves the YBE.
\end{lemma}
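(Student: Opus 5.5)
The plan is to treat the two claims separately. The first is a permutation-moving computation. The second should reduce to showing that both transformed operators $\tilde R_{ij}$ and $\tilde R_{jk}$ are conjugates of the untransformed ones by one common operator on $\mathbb{C}^d_i\otimes\mathbb{C}^d_j\otimes\mathbb{C}^d_k$. I label the three sites $i,j,k$ and use the standard rule $P_{ab}X_{bc}=X_{ac}P_{ab}$ (and its analogues) for relocating operators past permutations.

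\textbf{Step 1 (the relation for $V$).} I write out the braid relation for $\check R=PV$:
\[
P_{ij}V_{ij}P_{jk}V_{jk}P_{ij}V_{ij}=P_{jk}V_{jk}P_{ij}V_{ij}P_{jk}V_{jk}.
\]
On the left I push every permutation to the far left. Using $V_{ij}P_{jk}=P_{jk}V_{ik}$ and $V_{ik}V_{jk}P_{ij}=P_{ij}V_{jk}V_{ik}$, the left side becomes $P_{ij}P_{jk}P_{ij}\,V_{jk}V_{ik}V_{ij}$. The same manipulation turns the right side into $P_{jk}P_{ij}P_{jk}\,V_{ij}V_{ik}V_{jk}$. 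Since the permutations themselves satisfy $P_{ij}P_{jk}P_{ij}=P_{jk}P_{ij}P_{jk}$ and are invertible, I cancel them to obtain $V_{jk}V_{ik}V_{ij}=V_{ij}V_{ik}V_{jk}$. This is the stated relation. Each step is reversible, so this is in fact an equivalence.

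\textbf{Step 2 (transformed $R$ as a common conjugation).} Let
\[
X:=V_{ij}V_{ik}V_{jk}=V_{jk}V_{ik}V_{ij},
\]
where the equality is Step 1. From the first expression, $V_{ij}=X\,(V_{ik}V_{jk})^{-1}$. Hence
\[
\tilde R_{ij}=V_{ij}R_{ij}V_{ij}^{-1}=X\,(V_{ik}V_{jk})^{-1}R_{ij}(V_{ik}V_{jk})\,X^{-1}=X R_{ij}X^{-1},
\]
where the last equality uses the constraint $[R_{ij},V_{ik}V_{jk}]=0$.

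The transformed operator in the second slot of the YBE is the translated version $V_{jk}R_{jk}V_{jk}^{-1}$. From the second expression for $X$, $V_{jk}=X\,(V_{ik}V_{ij})^{-1}$. The constraint $[R_{jk},V_{ik}V_{ij}]=0$ then gives
\[
\tilde R_{jk}=X R_{jk}X^{-1}.
\]

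\textbf{Step 3 (conclusion).} Since both $\tilde R_{ij}$ and $\tilde R_{jk}$ are conjugates of $R_{ij}$ and $R_{jk}$ by the same $X$, I get
\[
\tilde R_{ij}\tilde R_{jk}\tilde R_{ij}=X(R_{ij}R_{jk}R_{ij})X^{-1}=X(R_{jk}R_{ij}R_{jk})X^{-1}=\tilde R_{jk}\tilde R_{ij}\tilde R_{jk}.
\]
So $V_{ij}R_{ij}V_{ij}^{-1}$ solves the YBE.

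The main obstacle is the idea in Step 2: finding the single conjugator $X$. This is exactly where the relation from Step 1 is needed, since it lets $X$ be factored with $V_{ij}$ on the left and also with $V_{jk}$ on the left. The remaining steps are routine bookkeeping of indices under permutations, which I would double-check on the explicit products.
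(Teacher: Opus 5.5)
Your proof is correct and follows the paper's argument. The paper likewise inserts $V_{ij}=(V_{ij}V_{ik}V_{jk})(V_{ik}V_{jk})^{-1}$ and uses the two commutation constraints together with $V_{ij}V_{ik}V_{jk}=V_{jk}V_{ik}V_{ij}$ to show that both sides of the transformed YBE are conjugated by this common operator. It does so by threading the conjugator through the whole triple product, whereas you show $\tilde R_{ij}=XR_{ij}X^{-1}$ and $\tilde R_{jk}=XR_{jk}X^{-1}$ factor by factor, and your Step 1 also verifies the first claim, which the paper only asserts.
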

\begin{proof}
    The result follows from a direct computation. Consider the l.h.s of the YBE for the transformed $R$ operator:
    \begin{align*}
        & \tilde{R}_{ij}\tilde{R}_{jk}\tilde{R}_{ij} \\
        = & V_{ij}R_{ij}V_{ij}^{-1}V_{jk}R_{jk}V_{jk}^{-1}V_{ij}R_{ij}V_{ij}^{-1} \\ 
        = & V_{ij}V_{ik}V_{jk}V_{jk}^{-1}V_{ik}^{-1}R_{ij}V_{ij}^{-1}V_{jk}R_{jk}V_{jk}^{-1}V_{ij}R_{ij}V_{ij}^{-1} \\ 
        = & \left(V_{ij}V_{ik}V_{jk}\right)R_{ij}V_{jk}^{-1}V_{ik}^{-1}V_{ij}^{-1}V_{jk}R_{jk}V_{jk}^{-1}V_{ij}R_{ij}V_{ij}^{-1} \\ 
        = & \left(V_{ij}V_{ik}V_{jk}\right)R_{ij}V_{ij}^{-1}V_{ik}^{-1}R_{jk}V_{jk}^{-1}V_{ij}R_{ij}V_{ij}^{-1} \\ 
        = & \left(V_{ij}V_{ik}V_{jk}\right)R_{ij}R_{jk}V_{ij}^{-1}V_{ik}^{-1}V_{jk}^{-1}V_{ij}R_{ij}V_{ij}^{-1} \\ 
        = & \left(V_{ij}V_{ik}V_{jk}\right)R_{ij}R_{jk}V_{jk}^{-1}V_{ik}^{-1}R_{ij}V_{ij}^{-1} \\ 
        = & \left(V_{ij}V_{ik}V_{jk}\right)R_{ij}R_{jk}R_{ij}\left(V_{jk}^{-1}V_{ik}^{-1}V_{ij}^{-1}\right).
    \end{align*}
    In the above computation we have inserted an identity in the first step, made repeated use of the inverse of the YBE and used the relations $$V_{jk}^{-1}V_{ik}^{-1}R_{ij} = R_{ij}V_{jk}^{-1}V_{ik}^{-1}.$$
    All of these relations are obtained from the constraints obeyed by $V$. In a manner similar to the one shown above, the r.h.s of the YBE for the transformed $R$ simplifies to,
    \begin{align*}
         & \tilde{R}_{jk}\tilde{R}_{ij}\tilde{R}_{jk} \\
         = & \left(V_{jk}V_{ik}V_{ij}\right)R_{jk}R_{ij}R_{jk}\left(V_{ij}^{-1}V_{ik}^{-1}V_{jk}^{-1}\right),
    \end{align*}
    showing that $\tilde{R}=VRV^{-1}$ also satisfies the YBE. This completes the proof.
\end{proof}
The generalized twist symmetries stated in Lemma \ref{lem:gen-twist-YBE} may seem obscure in the way it is stated. It becomes clearer when we consider the constraints satisfied by the braid operator $P_{ij}V_{ij}$. Then the constraints between $R$ and $PV$ from Lemma \ref{lem:gen-twist-YBE} reduce to
\begin{equation}
    R_{jk}P_{ij}V_{ij}P_{jk}V_{jk} = P_{ij}V_{ij}P_{jk}V_{jk}R_{ij}.
\end{equation}
This relation shows that the operator $PV$ exchanges the indices of $R$, playing a role similar to that of the standard permutation operator. However, $PV$ is not quite the permutation operator itself, as it does not necessarily square to one. Thus we can view the operator $PV$ as a {\it dressed} permutation operator. Then the generalized twist symmetries can be seen as an extension of the discrete permutation symmetry of the YBE solutions, to be discussed in Section \ref{subsec:d-symmetry-III}.

We will now find analogous generalized twist symmetries for the multi-site gYBEs. Just as in the case of the twist symmetries of the gYBE, here too the answers depend on two sets of $m$ values for both odd and even $l$. These are formulated in the following two Theorems.
\begin{theorem}[Generalized twist symmetries of the gYBE - I]\label{thm:gen-twist-gYBE-I}
 For a given $l$, let $m$ take values in $\lceil l/2\rceil\leq m\leq l-1$. Consider the invertible operator $$V_{\vec{i},\vec{j}}\equiv V_{i_1\cdots i_{l-m};j_{m+1}\cdots j_l},$$ belonging to $\mathrm{GL}(d^{2(l-m)},\mathbb{C})$ such that it satisfies the relation
$$ V_{\vec{i},\vec{j}}V_{\vec{i},\vec{k}}V_{\vec{j},\vec{k}} = V_{\vec{j},\vec{k}}V_{\vec{i},\vec{k}}V_{\vec{i},\vec{j}}, $$
where $\vec{k} = k_{2m+1}\cdots k_{l+m}$. Furthermore we also have the constraints 
\begin{align*}
      \left[R_{\vec{i}\vec{u}\vec{j}}, V_{\vec{i}\vec{k}}V_{\vec{j}\vec{k}}\right] = \left[R_{\vec{j}\vec{w}\vec{k}}, V_{\vec{i}\vec{k}}V_{\vec{i}\vec{j}} \right] = 0.
  \end{align*}
 Here $\vec{u}=u_{l-m+1}\cdots u_m$ and $\vec{w}=w_{l+1}\cdots w_{2m}$ are the indices between $\vec{i}$, $\vec{j}$ and $\vec{j}$, $\vec{k}$ respectively. 
Then $\tilde{R} = VRV^{-1}$ also satisfies the $(d,l,m)$-gYBE.  
\end{theorem}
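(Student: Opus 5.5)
The proof will mirror the computation in Lemma \ref{lem:gen-twist-YBE}, with the single indices $i,j,k$ replaced by the index blocks $\vec{i},\vec{j},\vec{k}$. First, I fix the index bookkeeping on the support $\bigotimes_{a=1}^{l+m}\mathbb{C}^d_a$. Because $m\geq\lceil l/2\rceil$, the block $\vec{i}$ (sites $1,\dots,l-m$), the gap $\vec{u}$ (sites $l-m+1,\dots,m$), the block $\vec{j}$ (sites $m+1,\dots,l$), the gap $\vec{w}$ (sites $l+1,\dots,2m$) and the block $\vec{k}$ (sites $2m+1,\dots,l+m$) are disjoint and cover all $l+m$ sites. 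The first $R$-matrix is then $R_{i_1\cdots i_l}=R_{\vec{i}\vec{u}\vec{j}}$ and the shifted one is $R_{i_{1+m}\cdots i_{l+m}}=R_{\vec{j}\vec{w}\vec{k}}$.

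Reading the statement, $VRV^{-1}$ must be $V_{\vec{i}\vec{j}}R_{\vec{i}\vec{u}\vec{j}}V_{\vec{i}\vec{j}}^{-1}$, conjugating only the two outer blocks of $R$. Its translate by $m$ is $V_{\vec{j}\vec{k}}R_{\vec{j}\vec{w}\vec{k}}V_{\vec{j}\vec{k}}^{-1}$. The key observation is that $V_{\vec{i}\vec{j}}$ acts trivially on $\vec{w}$ and $\vec{k}$, $V_{\vec{j}\vec{k}}$ acts trivially on $\vec{i}$ and $\vec{u}$, and $V_{\vec{i}\vec{k}}$ acts trivially on the gaps. So every commutation used in the YBE proof has the same block structure here, with the gaps $\vec{u},\vec{w}$ as spectators that no $V$ touching them ever needs to pass.

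For the l.h.s. I will insert the identity $V_{\vec{i}\vec{k}}V_{\vec{j}\vec{k}}V_{\vec{j}\vec{k}}^{-1}V_{\vec{i}\vec{k}}^{-1}$ after the leading $V_{\vec{i}\vec{j}}$. I then use the constraint $[R_{\vec{i}\vec{u}\vec{j}},V_{\vec{i}\vec{k}}V_{\vec{j}\vec{k}}]=0$ to move $V_{\vec{j}\vec{k}}^{-1}V_{\vec{i}\vec{k}}^{-1}$ past the first $R$. The block YBE $V_{\vec{i}\vec{j}}V_{\vec{i}\vec{k}}V_{\vec{j}\vec{k}}=V_{\vec{j}\vec{k}}V_{\vec{i}\vec{k}}V_{\vec{i}\vec{j}}$, inverted, reorders the resulting string of $V^{-1}$'s. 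The constraint $[R_{\vec{j}\vec{w}\vec{k}},V_{\vec{i}\vec{k}}V_{\vec{i}\vec{j}}]=0$ then moves $V_{\vec{i}\vec{j}}^{-1}V_{\vec{i}\vec{k}}^{-1}$ past the middle $R$, and the first constraint again pushes $V_{\vec{j}\vec{k}}^{-1}V_{\vec{i}\vec{k}}^{-1}$ past the last $R$. The expected result is
\[
\left(V_{\vec{i}\vec{j}}V_{\vec{i}\vec{k}}V_{\vec{j}\vec{k}}\right)R_{\vec{i}\vec{u}\vec{j}}R_{\vec{j}\vec{w}\vec{k}}R_{\vec{i}\vec{u}\vec{j}}\left(V_{\vec{j}\vec{k}}^{-1}V_{\vec{i}\vec{k}}^{-1}V_{\vec{i}\vec{j}}^{-1}\right).
\]
The r.h.s. is handled by the mirror-image manipulation, inserting $V_{\vec{i}\vec{k}}V_{\vec{i}\vec{j}}(\cdots)^{-1}$ after the leading $V_{\vec{j}\vec{k}}$. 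This gives $(V_{\vec{j}\vec{k}}V_{\vec{i}\vec{k}}V_{\vec{i}\vec{j}})R_{\vec{j}\vec{w}\vec{k}}R_{\vec{i}\vec{u}\vec{j}}R_{\vec{j}\vec{w}\vec{k}}(V_{\vec{i}\vec{j}}^{-1}V_{\vec{i}\vec{k}}^{-1}V_{\vec{j}\vec{k}}^{-1})$. The block YBE for $V$ shows that both conjugating factors coincide, and the untwisted $(d,l,m)$-gYBE then gives equality.

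The main obstacle is the bookkeeping in the middle step. I must check that each constraint is invoked only on operators whose supports are as claimed, in particular that $V_{\vec{i}\vec{k}}$ never needs to pass through an $R$ acting on $\vec{u}$ or $\vec{w}$ other than through the stated product constraints. I also need to check that the hypothesis $m\geq\lceil l/2\rceil$ is exactly what makes $\vec{i}$ and $\vec{k}$ disjoint from the overlap block $\vec{j}$. I will write the reordering explicitly, line by line, in the same order as Lemma \ref{lem:gen-twist-YBE}, to confirm this.
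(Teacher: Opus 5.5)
Your proposal is correct and is essentially the paper's argument. The paper splits the $l+m$ sites into the blocks $\vec{i},\vec{u},\vec{j},\vec{w},\vec{k}$ exactly as you do, notes that $V$ only sees the embedded $(d,2(l-m),l-m)$ index structure with $\vec{u},\vec{w}$ as spectators, and then simply invokes Lemma \ref{lem:gen-twist-YBE} on those blocks. Your line-by-line block version of that Lemma's computation is the unpacked form of the same citation: it inserts the identity, then uses the two product constraints and the inverted block braid relation for $V$. It relies only on the stated hypotheses, so it goes through.
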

\begin{proof}
    The result follows from observing that the index configuration of the $(d,l,m)$-gYBE for $m\in\lceil l/2\rceil\leq m\leq l-1$, contains the index structure of the $(d,2(l-m), l-m)$-gYBE. This is indeed true as the integer $l-m$ counts the overlapping indices between two adjacent $R$'s in the $(d,l,m)$-gYBE. More explicitly, this can be seen by splitting the $l$ indices of the gYBE solution for two adjacent $R$'s in the following way:
    \begin{align*}
        & \left(1,\cdots, l-m\right)~;~\left(l-m+1,\cdots, m\right)~;~\left( m+1, \cdots, l\right) \\
        & \left(m+1,\cdots, l\right)~;~\left(l+1,\cdots, 2m\right)~;~\left( 2m+1, \cdots, l+m\right).
    \end{align*}
    These are precisely the indices labeled by $\vec{i}$, $\vec{u}$, $\vec{j}$, $\vec{w}$ and $\vec{k}$ respectively.
     The invertible operator $V$ precisely acts on the first and last $l-m$ indices of the gYBE solution $R$. The proof then follows by applying Lemma \ref{lem:gen-twist-YBE} to these indices.
\end{proof}
The multi-site operator $V$ in this Theorem can be made into a braid group generator by multiplying it with the permutation operator 
\begin{equation}
    P_{\vec{i}, \vec{j}} = \prod_{n=1}^{l-m}~P_{n,m+n}.
\end{equation}
Then it is easy to show that the operator $P_{\vec{i}, \vec{j}}V_{\vec{i}, \vec{j}}$ satisfies the $(d,2(l-m),l-m)$-gYBE. With this result, we can see that the constraints between $V$ and $R$ in Theorem \ref{thm:gen-twist-gYBE-I}, can be reinterpreted as the operator $PV$ acting as a generalized permutation operator on the gYBE solution $R$.

At this point a couple of simple examples are useful to understand Theorem \ref{thm:gen-twist-gYBE-I}. If $R$ satisfies the $(d,4,3)$-gYBE
\begin{align}
   & R_{i_1i_2i_3i_4}R_{i_4i_5i_6i_7}R_{i_1i_2i_3i_4}  \nonumber \\
   = & R_{i_4i_5i_6i_7}R_{i_1i_2i_3i_4}R_{i_4i_5i_6i_7},
\end{align}
then $V_{il}R_{ijkl}V_{il}^{-1}$ also satisfies the $(d,4,3)$-gYBE, provided $P_{ij}V_{ij}$ acts like a dressed permutation operator on $R$. For an odd $l$ example, consider $R$ satisfying the $(d,5,3)$-gYBE
\begin{align}
    &  R_{i_1i_2i_3i_4i_5}R_{i_4i_5i_6i_7i_8}R_{i_1i_2i_3i_4i_5}  \nonumber \\
   = & R_{i_4i_5i_6i_7i_8}R_{i_1i_2i_3i_4i_5}R_{i_4i_5i_6i_7i_8},
\end{align}
then the transformation $V_{ij;lm}R_{ijklm}V_{ij;lm}^{-1}$ is a symmetry of the $(d,5,3)$-gYBE provided that $P_{il}P_{jm}V_{ij;lm}$ acts as a dressed permutation on gYBE solution $R$. 

We are now left with finding the generalized twist symmetries for low $m$ values, i.e. for the case when $m\in\{1,\cdots, p\}\left(\{1,\cdots, p-1\}\right)$ corresponding to $l=2p+1(l=2p)$. From Lemma \ref{lem:gen-twist-YBE}, we expect such symmetries to exist for these $m$ values when we are able to embed the indices of the $(d,2s,s)$-gYBE inside the indices of $(d,l,m)$-gYBE as shown in Theorem \ref{thm:gen-twist-gYBE-I}. This is the case when the multi-site $R$ we consider is a generic solution not taking any particular form. By this we mean a $R$ that does not factorize over the indices on which they act. An example of such a factorization is given by,
$$R_{i_1\cdots i_l} \equiv Y_{i_1i_2}M_{i_3\cdots i_l}.$$
Excluding such cases we find that the only low-\(m\) case captured by this embedding construction is for $l=2p+1$ and $m=p$ when $p>1$. This result can be formulated in the following Theorem.
\begin{theorem}[Generalized twist symmetries of the gYBE - II]\label{thm:gen-twist-gYBE-II}
For odd $l=2p+1$, let $m=p>1$. Then the $(d,2p+1,p)$-gYBE solution $R$, transformed by the generalized twist transformation
\begin{equation*}
    R_{i_1\cdots i_l} \rightarrow 
      V_{i_2\cdots i_{p};i_{p+2}\cdots i_{2p}}R_{i_1\cdots i_l}V_{i_2\cdots i_{p};i_{p+2}\cdots i_{2p}}^{-1}
\end{equation*}
also satisfies the $(d,l,m)$-gYBE, provided that \(P_{\vec i,\vec j}V_{\vec i,\vec j}\) and \(V_{\vec i,\vec j}\) satisfy the corresponding braid and commutation conditions of Lemma \ref{lem:gen-twist-YBE} on the embedded \((d,2(p-1),p-1)\)-gYBE.
\end{theorem}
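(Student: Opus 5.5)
The plan is to reduce Theorem \ref{thm:gen-twist-gYBE-II} to Lemma \ref{lem:gen-twist-YBE}, in the same way that Theorem \ref{thm:gen-twist-gYBE-I} was reduced to it. The difference is that here the embedded $(d,2(p-1),p-1)$ structure sits in the interior of the $R$-matrices rather than on their boundary indices.

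\textbf{Step 1: identify the blocks.} For $l=2p+1$ and $m=p$, the gYBE lives on the indices $i_1,\dots,i_{3p+1}$ and involves
\[
R_{i_1\cdots i_{2p+1}} \quad\text{and}\quad R_{i_{p+1}\cdots i_{3p+1}}.
\]
I would split the indices into three blocks, each of size $p-1$:
\begin{align*}
\vec{a}&=(i_2,\dots,i_p), & \vec{b}&=(i_{p+2},\dots,i_{2p}), & \vec{c}&=(i_{2p+2},\dots,i_{3p}).
\end{align*}
These blocks are separated by the single indices $i_1,i_{p+1},i_{2p+1},i_{3p+1}$. The twist in the statement is $V_{\vec a,\vec b}$ acting on the first $R$. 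Translating by $m=p$ turns $\vec a$ into $\vec b$ and $\vec b$ into $\vec c$, so the transformed second $R$-matrix is $V_{\vec b,\vec c}R_{i_{p+1}\cdots i_{3p+1}}V_{\vec b,\vec c}^{-1}$. Thus $\vec a,\vec b,\vec c$ reproduce exactly the index pattern $(ij),(jk)$ of the YBE, with each index enlarged to $p-1$ sites. This is the $(d,2(p-1),p-1)$-gYBE embedding, and the hypothesis $p>1$ guarantees that the blocks are nonempty.

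\textbf{Step 2: write out the hypotheses.} The hypotheses, namely the braid and commutation conditions of Lemma \ref{lem:gen-twist-YBE} on this embedding, read
\begin{align*}
V_{\vec a\vec b}V_{\vec a\vec c}V_{\vec b\vec c}&=V_{\vec b\vec c}V_{\vec a\vec c}V_{\vec a\vec b},\\
\left[R_{i_1\cdots i_{2p+1}},\,V_{\vec a\vec c}V_{\vec b\vec c}\right]&=0,\\
\left[R_{i_{p+1}\cdots i_{3p+1}},\,V_{\vec a\vec c}V_{\vec a\vec b}\right]&=0.
\end{align*}

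\textbf{Step 3: rerun the computation of Lemma \ref{lem:gen-twist-YBE}.} I would repeat that computation verbatim with the replacements
\begin{align*}
R_{ij}&\to R_{i_1\cdots i_{2p+1}}, & R_{jk}&\to R_{i_{p+1}\cdots i_{3p+1}}, & V_{ij}&\to V_{\vec a\vec b}, & V_{jk}&\to V_{\vec b\vec c}, & V_{ik}&\to V_{\vec a\vec c}.
\end{align*}
The computation proceeds as follows.
\begin{enumerate}
\item Insert $V_{\vec a\vec c}V_{\vec b\vec c}\left(V_{\vec a\vec c}V_{\vec b\vec c}\right)^{-1}$ on the left.
\item Move $V_{\vec b\vec c}^{-1}V_{\vec a\vec c}^{-1}$ through the first $R$ using the second commutation constraint in Step 2.
\item Move $V_{\vec a\vec b}^{-1}V_{\vec a\vec c}^{-1}$ through the second $R$ using the third constraint in Step 2.
\end{enumerate}
The left-hand side then becomes
\[
\left(V_{\vec a\vec b}V_{\vec a\vec c}V_{\vec b\vec c}\right)\,\bigl(RRR\bigr)\,\left(V_{\vec a\vec b}V_{\vec a\vec c}V_{\vec b\vec c}\right)^{-1}.
\]
The same manipulation turns the right-hand side into
\[
\left(V_{\vec b\vec c}V_{\vec a\vec c}V_{\vec a\vec b}\right)\,\bigl(RRR\bigr)\,\left(V_{\vec b\vec c}V_{\vec a\vec c}V_{\vec a\vec b}\right)^{-1}.
\]
The braid relation in Step 2 identifies the two conjugating factors, and the untransformed gYBE then gives equality of the two sides.

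\textbf{Main obstacle.} The step needing care is checking that the proof of Lemma \ref{lem:gen-twist-YBE} uses only these abstract relations. In particular, it must never commute a $V$ past an $R$ merely because of disjoint support, since here the supports overlap differently: $V_{\vec a\vec c}$ meets both $R$'s, and the single separator indices are untouched. I would go through each line of that proof and confirm that every move is either an inserted identity, one of the two stated commutation constraints, or an application of the braid relation for $V$. If some step turns out to rely on a support argument, such as $V_{ij}$ commuting with an operator on $k$ alone, I would check that the analogous support statement still holds for the blocks $\vec a,\vec b,\vec c$. This should work because the blocks are pairwise disjoint, exactly as $i,j,k$ are.
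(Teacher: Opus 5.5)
Your proposal is correct and follows the paper's proof. The paper splits the indices as $1;(2,\dots,p);p+1;(p+2,\dots,2p);2p+1$ together with its $p$-translate, which are exactly your blocks $\vec a,\vec b,\vec c$ and separators, and then applies Lemma \ref{lem:gen-twist-YBE} on the parenthesized blocks. The worry in your ``main obstacle'' paragraph resolves in your favour: the lemma's computation uses only the inserted identity, the two commutation constraints, and the braid relation for $V$ and its inverse, and never a disjoint-support argument.
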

\begin{proof}
    The result follows by identifying the index structure of a $(d,2(p-1),p-1)$-gYBE in the $(d,2p+1,p)$-gYBE. This is seen by splitting the indices of two adjacent $R$'s in $(d,2p+1,p)$-gYBE as follows:
    \begin{align*}
        & 1~;~(2,\cdots, p)~;~p+1~;~(p+2,\cdots, 2p)~;~2p+1 \\
        & p+1~;~(p+2,\cdots, 2p)~;~2p+1~;~(2p+2,\cdots, 3p)~;~3p+1.
    \end{align*}
    Then supporting $V$ on the indices in the parentheses and using Lemma \ref{lem:gen-twist-YBE}, yields the desired result.
\end{proof}
Now we provide a couple of examples to illustrate Theorem \ref{thm:gen-twist-gYBE-II}. Let $R$ solve the $(d,7,3)$-gYBE
\begin{align}\label{eq:7-3-gYBE}
    & R_{i_1i_2i_3i_4i_5i_6i_7}R_{i_4i_5i_6i_7i_8i_9i_{10}}R_{i_1i_2i_3i_4i_5i_6i_7} \nonumber \\
    = & R_{i_4i_5i_6i_7i_8i_9i_{10}}R_{i_1i_2i_3i_4i_5i_6i_7}R_{i_4i_5i_6i_7i_8i_9i_{10}}.
\end{align}
Then we see that the indices $(2,3;5,6;8,9)$ coincide with the indices appearing in a $(d,4,2)$-gYBE. Thus the gYBE solution rotated by $V_{i_2i_3;i_5i_6}$ will also satisfy the $(d,7,3)$-gYBE following Lemma \ref{lem:gen-twist-YBE}.

For the next example consider the $(d,9,4)$-gYBE
\begin{align}
    & R_{i_1i_2i_3i_4i_5i_6i_7i_8i_9}R_{i_5i_6i_7i_8i_9i_{10}i_{11}i_{12}i_{13}}R_{i_1i_2i_3i_4i_5i_6i_7i_8i_9} \nonumber \\
    = & R_{i_5i_6i_7i_8i_9i_{10}i_{11}i_{12}i_{13}}R_{i_1i_2i_3i_4i_5i_6i_7i_8i_9}R_{i_5i_6i_7i_8i_9i_{10}i_{11}i_{12}i_{13}}.
\end{align}
In this case the indices of a $(d,6,3)$-gYBE are found among
\[
(2,3,4;6,7,8;10,11,12).
\]
Thus the gYBE solution rotated by $V_{i_2i_3i_4;i_6i_7i_8}$ satisfies the $(d,9,4)$-gYBE.

\begin{remark}
For low values of $m$, Theorems \ref{thm:gen-twist-gYBE-I} and \ref{thm:gen-twist-gYBE-II} show that the generalized-twist construction based on the embedded $(d, 2s, s)$-gYBE does not apply to generic multi-site $R$-matrices, except in the cases identified above. We can nevertheless formulate such generalized twist symmetries for these cases when the multi-site $R$-matrices take special forms. These possibilities are best illustrated with examples. Consider a $(d,3,1)$-gYBE:
\begin{equation*}
    R_{i_1i_2i_3}R_{i_2i_3i_4}R_{i_1i_2i_3} = R_{i_2i_3i_4}R_{i_1i_2i_3}R_{i_2i_3i_4},
\end{equation*}
It is easy to see that this gYBE admits two special solutions constructed from standard YBE solutions, $Y$. These are given by 
$$ R_{i_1i_2i_3} \equiv Y_{i_1i_2}~;~ R_{i_1i_2i_3} \equiv Y_{i_2i_3}.$$
Then Lemma \ref{lem:gen-twist-YBE} can be applied for these types of solutions on the indices $(i_1, i_2)$ and $(i_2, i_3)$. The corresponding generalized twist symmetries are given by $V_{i_1i_2}$ and  $V_{i_2i_3}$, respectively.

Next consider the $(d,6,2)$-gYBE
\begin{align*}
    & R_{i_1i_2i_3i_4i_5i_6}R_{i_3i_4i_5i_6i_7i_8}R_{i_1i_2i_3i_4i_5i_6} \nonumber \\
    = & R_{i_3i_4i_5i_6i_7i_8}R_{i_1i_2i_3i_4i_5i_6}R_{i_3i_4i_5i_6i_7i_8}.
\end{align*}
Now there are at least four special solutions of this equation using the standard YBE\footnote{Note that when the indices are disjoint we can form more special solutions. An example in the $(d,6,2)$-gYBE is $Y_{i_1i_3}W_{i_2i_4}$, with $Y$ and $W$ being two different YBE solutions. Thus the set shown here is minimal.}:
$$ R_{i_1\cdots i_6}\in \left\{Y_{i_1i_3}, Y_{i_2i_4}, Y_{i_3i_5}, Y_{i_4i_6}\right\}.$$
The Lemma \ref{lem:gen-twist-YBE} can then be applied on precisely these sets of indices and thus shows that generalized twist symmetries exist for these special classes of low $m$ solutions.

These examples can be generalized to an arbitrary $(d,l,m)$-gYBE. To see this we observe that we can embed the indices of the standard YBE or the $(d,2,1)$-gYBE into the $(d,l,m)$-gYBE in the following manner. In a $(d,l,m)$-gYBE, the standard YBE sits in the indices $j$, $j+m$ and $j+2m$, for $j\in\{1,\cdots,l-m\}$. Thus applying Lemma \ref{lem:gen-twist-YBE} on these indices helps us to construct the generalized twist symmetry in each of these cases. 

It should be noted that the above construction holds for all values of $l$ and $m$, including the higher values of $m$. However, in the latter cases these generalized twist symmetries are special choices of those obtained in Theorems \ref{thm:gen-twist-gYBE-I} and \ref{thm:gen-twist-gYBE-II}.
\end{remark}

%------------------------------------------------
\section{Discrete Symmetries}
\label{sec:d-symmetries}
%------------------------------------------------
The standard YBE has three types of discrete symmetries. We will recall each one of them separately, contrasting them with the analogous discrete symmetries for the gYBE. Out of the three, two of them generalize to the gYBE case in a straightforward manner. We will look at them first before moving on to the third discrete symmetry that is significantly different in the gYBE case.

%------------------------------------------------
\subsection{Discrete Symmetry - I}
\label{subsec:d-symmetry-I}
%------------------------------------------------
The first discrete symmetry is effected by the transformation,
\begin{equation}\label{eq:discrete-1}
    R \rightarrow R^T,~~T\equiv\textrm{transpose}.
\end{equation}
This is valid for any $(d,l,m)$-gYBE as seen by transposing \eqref{eq:gYBE}.

%------------------------------------------------
\subsection{Discrete Symmetry - II}
\label{subsec:d-symmetry-II}
%------------------------------------------------
The second set of discrete symmetries corresponds to a relabeling of the basis elements of the local Hilbert space, $\mathbb{C}^d$. This symmetry depends on $d$ or the representation chosen for $R$. Let the space $\mathbb{C}^d$ be spanned by the $d$ basis vectors, $\{\ket{\alpha}~|~ \alpha\in(1,\cdots,d) \}$. Consider $d-1$ transposition operators permuting these basis vectors,
\begin{equation}
    P_{\alpha,\alpha+1} = \ket{\alpha}\bra{\alpha+1} + \ket{\alpha+1}\bra{\alpha} + \sum\limits_{\beta\neq\alpha,\alpha+1}~\ket{\beta}\bra{\beta},
\end{equation}
with $\alpha\in(1,\cdots,d-1)$. They generate the permutation group $S_d$.
Then the following transformed multi-site $R$-matrices,
\begin{equation}\label{eq:discrete-2}
    R_{j_1\cdots j_l} \rightarrow \left(\prod\limits_{k=1}^l\left(P_{\alpha,\alpha+1}\right)_{j_k} \right)R_{j_1\cdots j_l}\left(\prod\limits_{k=1}^l\left(P_{\alpha,\alpha+1}\right)_{j_k} \right)~;~\alpha\in\{1,\cdots, d-1\},
\end{equation}
also satisfy the $(d,l,m)$-gYBE.

Notice that the operator in this symmetry transformation is completely factorized over the $l$-fold tensor product and thus is similar to the transformation shown in \eqref{eq:gldc-gauge-symmetry-gYBE}. However, in the case of the basis change, the symmetry transformation is finite and discrete, unlike the similarity transform in \eqref{eq:gldc-gauge-symmetry-gYBE}, which is continuous. Thus we do not view the discrete symmetry of \eqref{eq:discrete-2} as a special case of the continuous similarity transform, \eqref{eq:gldc-gauge-symmetry-gYBE}.

%------------------------------------------------
\subsection{Discrete Symmetry - III}
\label{subsec:d-symmetry-III}
%------------------------------------------------
The third set of discrete symmetries is associated with the rearranging of the indices in the $R$-matrix, such that the transformed operator continues to obey a YBE relation with its indices permuted appropriately. The 2-site $R$-matrix satisfying the standard YBE does not have enough room in its support to accommodate a large number of such permutation symmetries, other than a simple exchange of its two indices. On the other hand, the enlarged support of the multi-site $R$-matrices offers many possibilities for such permutation symmetries in the gYBE. However, this is still heavily constrained by the values of $l$ and $m$ as we will show below. Thus we expect an enhancement of permutation symmetries for the multi-site $R$-matrix. Before addressing this we will recall the permutation symmetry of the standard 2-site $R$-matrix. This will act as a guide in identifying the permutation symmetries of the gYBE.

Consider the following exchange transformation on the 2-site $R$-matrix,
\begin{equation}\label{eq:2-site-permutation}
    R_{ij}\rightarrow\tilde{R}_{ij} = P_{ij}R_{ij}P_{ij},
\end{equation}
with $P$ being the standard permutation operator on $\mathbb{C}^d\otimes\mathbb{C}^d$. Then we can show that the two sides of the braid relation for $\tilde{R}$, reduce to  
\begin{align*}
  \tilde{R}_{ij}\tilde{R}_{jk}\tilde{R}_{ij} & = R_{ji}R_{kj}R_{ji} \\ & = P_{ik}\left(R_{jk}R_{ij}R_{jk}\right)P_{ik}, 
  \\ \tilde{R}_{jk}\tilde{R}_{ij}\tilde{R}_{jk} & = R_{kj}R_{ji}R_{kj} \\ & = P_{ik}\left(R_{ij}R_{jk}R_{ij}\right)P_{ik},
\end{align*}
confirming the fact that they are also braid operators. This implies that the 2-site $R$ operator enjoys an Abelian $\mathbb{Z}_2$ permutation symmetry on its two indices. This transformation can also be viewed as a {\it spatial parity symmetry} of the solution of the YBE. 

We will now find analogous permutation symmetries of the multi-site $R$ operator satisfying the $(d,l,m)$-gYBE. We would naively expect that this operator has a full $S_l$ symmetry, obtained by permuting its $l$ indices. However, the number of permutation symmetries is much smaller than this as we shall now show.

We begin by clarifying what constitutes a permutation symmetry of the $(d,l,m)$-gYBE. This definition will also help us set the criteria required of such permutation symmetries. For a general $(d,l,m)$-gYBE, two adjacent $R$'s with supports on the indices, $\left\{1,\cdots, l\right\}$ and $\left\{m+1,\cdots, m+l\right\}$, overlap on $l-m$ indices given by the set $$\left\{m+1,\cdots, l\right\}.$$ Under a permutation transformation $\pi$, the transformed $R$ is given by 
\begin{eqnarray*}
    \tilde{R}_{i_1\cdots i_l} = R_{i_{\pi(1)},\cdots, i_{\pi(l)}}.
\end{eqnarray*}
For the permutation $\pi$ to be a symmetry, we require it to preserve the overlap pattern of the $(d,l,m)$-gYBE. We seek permutations for which the same local rearrangement can be applied
consistently to both copies of $R$. The essential point is that a tensor factor in the overlap has two local descriptions, one in each copy of $R$, and these two descriptions must assign the same image to that tensor factor.

Consider the physical tensor position $m+a$, where
$1\leq a\leq l-m$. In the first copy $R_{i_1,\ldots,i_l}$, this tensor factor occupies local position $m+a$, whereas in the second copy
$R_{i_{m+1},\ldots,i_{m+l}}$ it occupies local position $a$. Suppose first that a global permutation $\sigma\in S_{l+m}$ preserves the two supports separately. Its restrictions to the two copies must then satisfy
\begin{equation}
 \sigma(t)=\pi(t),
 \qquad 1\leq t\leq l,
 \label{eq:sigma-first-support}
\end{equation}
and
\begin{equation}
 \sigma(m+t)=m+\pi(t),
 \qquad 1\leq t\leq l.
 \label{eq:sigma-second-support}
\end{equation}
Applying \eqref{eq:sigma-first-support} and
\eqref{eq:sigma-second-support} to the same physical position $m+a$ gives
\begin{equation}
 \sigma(m+a)=\pi(m+a),
 \qquad
 \sigma(m+a)=m+\pi(a),
\end{equation}
respectively. Consistency therefore requires
\begin{equation}
 \pi(a+m)=\pi(a)+m,
 \qquad 1\leq a\leq l-m.
 \label{eq:positive-overlap-condition}
\end{equation}
This is the overlap-preserving condition. Conversely, if
\eqref{eq:positive-overlap-condition} holds, the prescriptions
\eqref{eq:sigma-first-support} and \eqref{eq:sigma-second-support} agree on $\left\{m+1,\cdots, l\right\}$ and hence define a consistent global relabeling of the tensor
factors.

The structure imposed by \eqref{eq:positive-overlap-condition} becomes
transparent upon introducing the partially defined shift
\begin{equation}
 T(a)=a+m,
 \qquad 1\leq a\leq l-m.
 \label{eq:partial-shift}
\end{equation}
The overlap condition is precisely
\begin{equation}
 \pi\bigl(T(a)\bigr)=T\bigl(\pi(a)\bigr),
\end{equation}
so that $\pi$ commutes with the shift $T$ wherever $T$ is defined. Notice
also that \eqref{eq:positive-overlap-condition} forces
$\pi(a)\leq l-m$ whenever $a\leq l-m$. Since $\pi$ is bijective, it follows
that $\pi$ preserves the domain of $T$. Thus $\pi$ is an automorphism of the
directed graph whose arrows are
\begin{equation}
 a\longrightarrow a+m,
 \qquad 1\leq a\leq l-m.
 \label{eq:shift-arrows}
\end{equation}

The connected components of this directed graph are finite chains. Their initial vertices are precisely $1,2,\ldots,m$, and the chain beginning at $r\in\{1,2,\ldots,m\}$ is
\begin{equation}
 C_r= \{r,r+m,r+2m,\ldots\}\cap\{1,2,\ldots,l\}.
 \label{eq:m-chains}
\end{equation}
Thus there are precisely $m$ such chains that can be accommodated into the set $\left\{1,\cdots, l\right\}$. To find their lengths we notice that one of the chains should end on the index $l$. This gives us 
\begin{equation}
 l=qm+s, \qquad 0\leq s<m.
 \label{eq:euclidean-division}
\end{equation}
Then the number of vertices in $C_r$ or the length of the chain $C_r$ can be computed using the formula
\begin{equation}
 L_r = \left\lfloor\frac{l-r}{m}\right\rfloor+1=\begin{cases}
 q+1, & 1\leq r\leq s,\\[2mm]
 q,   & s<r\leq m.
 \end{cases}
 \label{eq:chain-lengths}
\end{equation}
This gives us $s$ chains of length $q+1$ and $m-s$ chains of length $q$. Because $\pi$ preserves the arrows in \eqref{eq:shift-arrows}, it maps an entire chain to another chain, preserves the order along that chain, and can only map two chains into one another when they have the same length. The long chains may be permuted arbitrarily among themselves, as may the short chains.

If $\phi$ denotes such a permutation of the chain labels, then
the corresponding local permutation is completely determined by
\begin{equation}
 \pi_\phi(r+km)=\phi(r)+km,
 \qquad 0\leq k<L_r,
 \label{eq:chain-permutation-formula}
\end{equation}
where $L_{\phi(r)}=L_r$. Thus the support-preserving transformations arise from independent permutations of the two classes of equal-length chains.

There is a second possibility in preserving the overlap structure of the $(d,l,m)$-gYBE. A global relabeling may exchange the two
supports, rather than preserve them separately. In this case its restrictions have the form
\begin{equation}
 \sigma(t)=m+\pi(t), \qquad \sigma(m+t)=\pi(t).
 \label{eq:support-exchange-restrictions}
\end{equation}
Consistency on the overlap now requires
\begin{equation}
 m+\pi(m+a)=\pi(a),
\end{equation}
or equivalently
\begin{equation}
 \pi(a+m)=\pi(a)-m, \qquad 1\leq a\leq l-m.
 \label{eq:negative-overlap-condition}
\end{equation}
Equation~\eqref{eq:negative-overlap-condition} reverses the direction of the shift arrows. Consequently, a chain must again be mapped to a chain of the same length, but its order is now reversed. The general form is
\begin{equation}
 \pi(r+km) = \phi(r)+(L_r-1-k)m, \qquad 0\leq k<L_r,
 \label{eq:reversed-chain-formula}
\end{equation}
where $\phi$ permutes chains of equal length. In particular, taking
$\phi$ to be the identity gives the simultaneous internal reversal
\begin{equation}
 J(r+km)=r+(L_r-1-k)m.
 \label{eq:simultaneous-chain-reversal}
\end{equation}
The reversal must be simultaneous: the two full supports are either preserved or exchanged, so one cannot independently choose the orientation of each chain. This accounts for a single additional involution, rather than one independent involution for every chain. The ordinary spatial reversal
\begin{equation}
 \rho(a)=l+1-a
\end{equation}
also satisfies \eqref{eq:negative-overlap-condition}. It may differ from $J$ by a permutation of chains of equal length, but belongs to the same support-exchanging class.

The above observations can be captured into the theorem below describing the general permutation symmetry-classification of the $(d,l,m)$-gYBE.

%The role of Euclidean division in this construction should be distinguished from that of $\gcd(l,m)$. The chains in Eq.~\eqref{eq:m-chains} are obtained by repeatedly adding $m$ only while the resulting position remains in the finite interval $\{1,\ldots,l\}$. There is no reduction modulo $l$. Hence these components are finite paths rather than modular cycles, and their two possible lengths are governed by the remainder $s=l\bmod m$. 

\begin{theorem}[Universal permutation symmetries of the $(d,l,m)$-gYBE]
\label{thm:universal-permutation-symmetries}
Let $R$ satisfy the braid-form $(d,l,m)$-gYBE, and write
\[l=qm+s,\qquad 0\leq s<m. \]
For the chains
\[C_r=\{r,r+m,r+2m,\ldots\}\cap\{1,\ldots,l\}, \qquad L_r=|C_r|,\]
let $\phi$ range over the permutations of $\{1,\ldots,m\}$ satisfying
$L_{\phi(r)}=L_r$. The universal permutation symmetries are
precisely
\[\pi^+_{\phi}(r+km)=\phi(r)+km,~~ \pi^-_{\phi}(r+km)=\phi(r)+(L_r-1-k)m, ~~0\leq k<L_r.\]
The first family preserves the two adjacent supports and the second exchanges
them. Consequently,
\[
 G_{\mathrm{perm}}(l,m)
 \cong S_s\times S_{m-s}\times\mathbb{Z}_2,
 \qquad
 |G_{\mathrm{perm}}(l,m)|=2\,s!\,(m-s)!,
\]
where $S_0$ and $S_1$ are understood to be trivial. The ordinary spatial
reversal $\rho(a)=l+1-a$ belongs to the second family.
\end{theorem}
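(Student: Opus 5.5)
The plan is to prove the statement in two halves: \emph{sufficiency}, that each $\pi^\pm_\phi$ really is a symmetry, and \emph{completeness}, that every admissible $\pi$ is of one of these two forms. The group structure and its order then follow by counting.

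\textbf{Sufficiency.} Fix $\pi=\pi^+_\phi$. I would first check that it is well defined and bijective on $\{1,\ldots,l\}$. Every index has a unique representation $r+km$ with $1\leq r\leq m$ and $0\leq k<L_r$, and $\phi$ maps chains of length $L_r$ onto chains of the same length. I would then check \eqref{eq:positive-overlap-condition} directly: if $a=r+km$ with $a\leq l-m$, then $k+1<L_r$, so $\pi(a+m)=\phi(r)+(k+1)m=\pi(a)+m$.

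Next I build the global relabeling $\sigma\in S_{l+m}$ from \eqref{eq:sigma-first-support}--\eqref{eq:sigma-second-support}. The two prescriptions agree on the overlap, and I need $\sigma$ to be a bijection of $\{1,\ldots,l+m\}$. This holds because the first support $\{1,\ldots,l\}$ is mapped onto itself and the second support $\{m+1,\ldots,m+l\}$ is mapped onto itself. Let $P_\sigma$ be the corresponding permutation operator on $\bigotimes_{j=1}^{l+m}\mathbb{C}^d_j$. Then $P_\sigma R_{i_1\cdots i_l}P_\sigma^{-1}=\tilde R_{i_1\cdots i_l}$, and likewise $P_\sigma R_{i_{m+1}\cdots i_{m+l}}P_\sigma^{-1}=\tilde R_{i_{m+1}\cdots i_{m+l}}$, exactly as in the two-site computation following \eqref{eq:2-site-permutation} (up to the convention $\pi$ versus $\pi^{-1}$). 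Conjugating both sides of \eqref{eq:gYBE} by $P_\sigma$ then gives the gYBE for $\tilde R$.

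For $\pi^-_\phi$ I do the same using \eqref{eq:support-exchange-restrictions} and \eqref{eq:negative-overlap-condition}. Now $\sigma$ swaps the two supports, so conjugation sends $R_1R_2R_1$ to $\tilde R_2\tilde R_1\tilde R_2$. The gYBE is symmetric under exchanging its two sides, so the transformed equation still holds. Finally, $\rho$ satisfies \eqref{eq:negative-overlap-condition}: $\rho(a+m)=l+1-a-m=\rho(a)-m$. Hence $\rho$ lies in the second family, taking $\phi=\rho\circ J$ restricted to chain heads.

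\textbf{Completeness.} This is the main obstacle, since it requires fixing what \emph{universal} means. I would define it as: $\pi$ comes from a global relabeling $\sigma$ that either preserves or exchanges the two supports. The paper's discussion supports this definition, and I would justify it by arguing that a symmetry valid for all solutions $R$ must map the family of supports to itself. Given that, the preceding analysis applies. Condition \eqref{eq:positive-overlap-condition} makes $\pi$ an automorphism of the chain digraph \eqref{eq:shift-arrows}. Such an automorphism sends chain heads, the vertices of in-degree $0$, to chain heads, and so is determined by an equal-length-preserving $\phi$; this gives \eqref{eq:chain-permutation-formula}. Condition \eqref{eq:negative-overlap-condition} makes $\pi$ an anti-automorphism, which gives the reversed formula.

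\textbf{Group structure.} The maps $\phi$ form $S_s\times S_{m-s}$, with $s$ long chains and $m-s$ short ones. The reversal $J$ is an involution that commutes with every $\pi^+_\phi$, and $\pi^-_\phi=\pi^+_\phi J$. Hence $G\cong S_s\times S_{m-s}\times\mathbb{Z}_2$ and $|G|=2\,s!\,(m-s)!$.

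One caveat concerns the degenerate case where all chains have length $1$ ($m\geq l$), in which $J$ would be the identity. This case is excluded because $m<l$ forces $q\geq1$. If $q=1$ and $s=0$, $J$ still swaps the two elements of each chain, so it is nontrivial and the $\mathbb{Z}_2$ factor is genuine.
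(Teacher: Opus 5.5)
Your proposal is correct under the same structural notion of universality that the paper adopts, namely a global relabeling $\sigma$ that preserves or exchanges the two supports. Your suggested semantic justification of that notion is not established (for instance, the solutions $Y_{j,j+m}$ built from YBE solutions are mapped to solutions even by reversing a single chain), but the paper does not need it either, and your argument otherwise follows the paper's route: the conditions $\pi(a+m)=\pi(a)\pm m$, their reading as automorphisms or anti-automorphisms of the chain digraph $a\to a+m$, and $G=H\times\langle J\rangle$, with the conjugation by $P_\sigma$ and the bijectivity of $\sigma$ spelled out more explicitly than in the paper. One slip in your closing caveat: $q=1$, $s=0$ means $l=m$, where every chain is a singleton and $J=\mathrm{id}$; what you actually need is that $m<l$ gives $1+m\leq l$, so no permutation can satisfy both $\pi(1+m)=\pi(1)+m$ and $\pi(1+m)=\pi(1)-m$, hence the two families are disjoint, $J\notin H$, and $|G|=2\,s!\,(m-s)!$.
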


\begin{proof}
By the preceding discussion, a support-preserving permutation satisfies
\[
 \pi(a+m)=\pi(a)+m.
\]
It is therefore an automorphism of the directed shift graph
$a\longrightarrow a+m$. Hence it permutes chains of equal length and
preserves their internal order, giving exactly the permutations
$\pi^+_{\phi}$. Since there are $s$ chains of length $q+1$ and $m-s$ chains
of length $q$, these permutations form
\[
 H\cong S_s\times S_{m-s}.
\]

A support-exchanging permutation instead satisfies
\[
 \pi(a+m)=\pi(a)-m.
\]
It reverses every arrow of the shift graph and therefore gives exactly the
permutations $\pi^-_{\phi}$. Let $J=\pi^-_{\id}$ be simultaneous reversal
within every chain. Every element of the second family is uniquely of the
form $\pi^+_{\phi}J$. Moreover, $J^2=\id$, and $J$ commutes with $H$ because
the elements of $H$ only exchange chains of equal length. Thus
\[
 G_{\mathrm{perm}}(l,m)
 =H\times\langle J\rangle
 \cong S_s\times S_{m-s}\times\mathbb{Z}_2.
\]
The order formula follows immediately. Finally,
\[
 \rho(a+m)=\rho(a)-m,
\]
so the ordinary spatial reversal belongs to the support-exchanging family.
\end{proof}

We now illustrate the chain construction and the resulting permutation groups through several examples. Throughout, $P_{ij}$ denotes the permutation operator that exchanges the $i$-th and $j$-th local tensor positions, and $\mathbf{1}$ denotes the identity permutation. We first consider the balanced case $l=2m$, followed by two examples with $m<l/2$ and two with $m>l/2$.

\subsubsection*{\texorpdfstring{The balanced $(d,6,3)$-gYBE}
{The balanced (d,6,3)-gYBE}}
Since $6=2\cdot3,$ the three chains all have length two:
\[ C_1=\{1,4\},\qquad C_2=\{2,5\},\qquad C_3=\{3,6\}.\]
They may therefore be permuted arbitrarily. Let
\[ A=P_{12}P_{45}, \qquad B=P_{23}P_{56}.\]
The operator $A$ exchanges $C_1$ and $C_2$, while $B$ exchanges $C_2$ and $C_3$. The support-preserving subgroup is
\[ H_{6,3} =\langle A,B\rangle =\{\mathbf{1},A,B,ABA,AB,BA\} \cong S_3,
\]
where
\[ABA=P_{13}P_{46},\]
\[AB=P_{12}P_{23}P_{45}P_{56}, \qquad BA=P_{23}P_{12}P_{56}P_{45}.
\]
Simultaneous reversal within the three chains is $J=P_{14}P_{25}P_{36}$. Thus all twelve permutation symmetries are
\[G_{\mathrm{perm}}(6,3)= \{hJ^\epsilon\mid h\in H_{6,3},\ \epsilon\in\{0,1\}\} \cong S_3\times\mathbb{Z}_2.\]
The ordinary spatial reversal is $\rho=P_{16}P_{25}P_{34}=(P_{13}P_{46})J$. This also demonstrates that simultaneous internal chain reversal and ordinary spatial reversal need not be represented by the same permutation, although they belong to the same support-exchanging family.

\subsubsection*{\texorpdfstring{Small $m$ and even $l$: the $(d,8,3)$-gYBE}
{Small m and even l: the (d,8,3)-gYBE}}
Here $8=2\cdot3+2$, and the chains are
\[
 C_1=\{1,4,7\},\qquad
 C_2=\{2,5,8\},\qquad
 C_3=\{3,6\}.
\]
The two length-three chains may be exchanged, while the length-two chain cannot be mixed with them. The nontrivial chain exchange is $\tau=P_{12}P_{45}P_{78}$ and simultaneous reversal within all three chains is $J=P_{17}P_{28}P_{36}$. Consequently, all four symmetries are
\[
 G_{\mathrm{perm}}(8,3)
 =
 \{\mathbf{1},\tau,J,\tau J\}
 \cong S_2\times\mathbb{Z}_2.
\]
The ordinary spatial reversal is $\rho=P_{18}P_{27}P_{36}P_{45}=\tau J$. Thus ordinary spatial reversal is not the only nontrivial permutation symmetry of the $(d,8,3)$-gYBE.

\subsubsection*{\texorpdfstring{Small $m$ and odd $l$: the $(d,5,2)$-gYBE}
{Small m and odd l: the (d,5,2)-gYBE}}
In this case, $5=2\cdot2+1$,
\[
 C_1=\{1,3,5\},\qquad
 C_2=\{2,4\}.
\]
The chains have different lengths, so there are no nontrivial chain exchanges.
Simultaneous chain reversal is $J=P_{15}P_{24}$. It coincides with ordinary spatial reversal, and therefore
\[
 G_{\mathrm{perm}}(5,2)
 =
 \{\mathbf{1},P_{15}P_{24}\}
 \cong\mathbb{Z}_2.
\]
This example shows that the support-exchanging $\mathbb{Z}_2$ symmetry remains even when no two chains can be exchanged.

\subsubsection*{\texorpdfstring{Large $m$ and odd $l$: the $(d,7,5)$-gYBE}
{Large m and odd l: the (d,7,5)-gYBE}}
Here $7=1\cdot5+2$. There are two chains of length two,
\[
 C_1=\{1,6\},\qquad C_2=\{2,7\},
\]
and three singleton chains,
\[
 C_3=\{3\},\qquad C_4=\{4\},\qquad C_5=\{5\}.
\]
The two length-two chains are exchanged by $A=P_{12}P_{67}$.
The permutations of the three singleton chains form
\[\mathcal{S}_{345}=\{\mathbf{1}, P_{34}, P_{45}, P_{35}, P_{34}P_{45},
 P_{45}P_{34}\}\cong S_3.\]
Simultaneous chain reversal acts nontrivially only on the two length-two chains and is given by $J=P_{16}P_{27}$. All twenty-four symmetries are therefore
\[
 G_{\mathrm{perm}}(7,5)
 =
 \{
 A^\alpha S J^\epsilon
 \mid
 \alpha,\epsilon\in\{0,1\},\
 S\in\mathcal{S}_{345}
 \}
 \cong S_2\times S_3\times\mathbb{Z}_2.
\]
The ordinary spatial reversal is $\rho=P_{17}P_{26}P_{35}
      =A\,P_{35}\,J$.

\subsubsection*{\texorpdfstring{Large $m$ and even $l$: the $(d,4,3)$-gYBE}
{Large m and even l: the (d,4,3)-gYBE}}
Finally, $4=1\cdot3+1$, and the chains are
\[
 C_1=\{1,4\},\qquad
 C_2=\{2\},\qquad
 C_3=\{3\}.
\]
The two singleton chains may be exchanged by $\tau=P_{23}$, while internal chain reversal is $J=P_{14}$. Hence all four permutation symmetries are
\[
 G_{\mathrm{perm}}(4,3)= \{\mathbf{1},P_{23},P_{14},P_{14}P_{23}\}
 \cong S_2\times\mathbb{Z}_2.
\]
The ordinary spatial reversal is $\rho=P_{14}P_{23}=\tau J$.

These examples display the different possibilities contained in the general
classification. For small $m$, either nontrivial exchanges of equal-length
chains are possible or only the reversal symmetry survives. For large $m$,
the singleton chains supply the additional symmetric-group factor.

This completes our analysis of the permutation symmetries of the $(d,l,m)$-gYBE. We end this discussion with the following remark.
The permutation symmetries of the solutions of the YBE can be seen from a different perspective. Instead of permuting the indices on the YBE solution $R$, we can instead shuffle the indices $i$, $j$, $k$ in the YBE, and ask the question if the transformed YBE can be obtained by permuting indices on the 2-site $R$ operator. For example, consider permuting the indices $i$ and $j$ on both sides of the YBE:
\begin{align}
   & P_{ij}\left(R_{ij}R_{jk}R_{ij}\right)P_{ij} \\
  \nonumber & = \tilde{R}_{ij}P_{ij}P_{jk}\tilde{R}_{jk}P_{jk}P_{ij}\tilde{R}_{ij} \\
 \nonumber & = \tilde{R}_{ij}\tilde{R}_{ki}\tilde{R}_{ij} \\
\nonumber & = P_{jk}P_{ij}\left(\tilde{R}_{jk}\tilde{R}_{ij}\tilde{R}_{jk} \right) P_{ij}P_{jk}, \\
  & P_{ij}\left(R_{jk}R_{ij}R_{jk}\right)P_{ij} \\
\nonumber & =\left(P_{ij}P_{jk}\tilde{R}_{jk}P_{jk}P_{ij}\right)\tilde{R}_{ij}\left(P_{ij}P_{jk}\tilde{R}_{jk}P_{jk}P_{ij}\right) \\
\nonumber & = \tilde{R}_{ki}\tilde{R}_{ij}\tilde{R}_{ki} \\
\nonumber & = P_{jk}P_{ij}\left(\tilde{R}_{ij}\tilde{R}_{jk}\tilde{R}_{ij} \right) P_{ij}P_{jk}.
\end{align}
A similar computation holds when the indices $j$ and $k$ are interchanged in the YBE. The above computations reveal the following two points. First, the 2-site $R$-matrix, under the interchange of its two indices, satisfies the braid relation. This is a $\mathbb{Z}_2$ symmetry generated by the standard permutation exchanging the two indices of the $R$-matrix. And second, the standard YBE is covariant under any permutation of the three indices it is supported on. This is realized by the transpositions, $P_{ij}$ and $P_{jk}$ which generate a full $S_3$ symmetry for the YBE. Moreover, every $S_3$ operation on the YBE can be traced back to a $\mathbb{Z}_2$, exchange symmetry on the 2-site $R$-matrix. This leads to the conclusion that for the standard YBE, the two perspectives on the permutation symmetries are entirely equivalent. 

An analogous equivalence for the multi-site $R$-matrix and the associated gYBEs is not always true. From the perspective of the permutation symmetries of the gYBE, the $(d,l,m)$-gYBE has a $S_{l+m}$ permutation symmetry, which implies that an arbitrary permutation of the $l+m$ indices in the gYBE will lead to another gYBE. For this to be compatible with the symmetries of the multi-site $R$, we require that the permutation-transformed gYBE be obtained by one of the $S_l$ operations on the multi-site $R$. However, our analysis of the permutation symmetries of the gYBE has shown this not to be true. This is an important difference between the permutation symmetries of the YBE and the gYBE.

%As a final remark we note that for low values of $m$, we can obtain gYBEs on a lower value of $l$ after partial tracing some of the indices. For example, consider the $(d,3,1)$-gYBE
%\begin{equation}
 %   R_{i_1i_2i_3}R_{i_2i_3i_4}R_{i_1i_2i_3} = R_{i_2i_3i_4}R_{i_1i_2i_3}R_{i_2i_3i_4}.
%\end{equation}
%In this case partial tracing either the index 2 or 3 results in a $(d,2,1)$-gYBE or the standard YBE. This partial tracing can be viewed as establishing an equivalence between a gYBE solution of higher $l$ with a gYBE solution of lower $l$. They correspond to semigroup symmetries of the gYBE. We will study these in the future.

%-------------------------------------------------------------------
\section{\texorpdfstring{Application: A symmetry-reduced ansatz for the constant $(2,4,2)$-gYBE}{Application: A symmetry-reduced ansatz for the constant (2,4,2)-gYBE}}
\label{sec:solutions}
%---------------------------------------------------------------------
The constant $(2,4,2)$-gYBE
\begin{eqnarray}
R_{1234}R_{3456}R_{1234} = R_{3456}R_{1234}R_{3456}
\end{eqnarray}
is non-linear and consists of a set of highly overdetermined equations for the entries of the four-site $R$. If this $R$-matrix is written as a generic $16\times 16$ matrix, one needs to solve for $256$ unknown entries. Explicitly, we write 
\begin{equation}
R = \begin{pmatrix}
r_{1,1} & r_{1,2} & \cdots & r_{1,16} \\
r_{2,1} & r_{2,2} & \cdots & r_{2,16} \\
\vdots  & \vdots  & \ddots & \vdots  \\
r_{16,1} & r_{16,2} & \cdots & r_{16,16}
\end{pmatrix},
\end{equation}
where $r_{i,j}$ denotes the matrix element connecting the basis states $i$ and $j$, with $i,j = 1, \dots, 16$. However, the equation itself produces a very large system of coupled cubic polynomial equations. A direct brute-force solution of this system is therefore impractical. For this reason, we have adopted a step-by-step reduction strategy based first on the discrete symmetries of the equation, then on a guided simplification of the ansatz, and finally on the factorized structure of the resulting polynomial equations. The first simplification comes from imposing the discrete symmetries of the $(2,4,2)$-gYBE. In addition to transposition,
\begin{eqnarray}
R^{T} = R,
\end{eqnarray}
we require invariance under the Pauli-$X$ conjugation and under the relevant
two-site permutation symmetries,
\begin{eqnarray}
R &=& X_{1}X_{2}X_{3}X_{4}\, R\, X_{1}X_{2}X_{3}X_{4}, \nonumber\\
R &=& P_{12}P_{34}\, R\, P_{12}P_{34}, \nonumber\\
R &=& P_{13}P_{24}\, R\, P_{13}P_{24}.
\end{eqnarray}
Here $X_i$ denotes the Pauli-$X$ matrix acting on site $i$, and $P_{ij}$ is the permutation operator that swaps sites $i$ and $j$. When we apply these discrete symmetries, we reduce the number of independent entries from $256$ down to $30$. Essentially, this means that the full $16\times 16$ matrix can be effectively described using a symmetric ansatz with $30$ parameters. The most noticeable effect of these constraints is the symmetry condition $R^{T}=R$, although the remaining discrete symmetries impose additional relations among the matrix elements. 

However, although the reduction due to the discrete symmetry is rather significant, it is not enough. Despite the reduction of the ansatz to $30$ independent parameters, its substitution in the $(2,4,2)$-gYBE leads to an enormous and complicated polynomial system. To make things more manageable, we examined the simplest factorized polynomial equations and decided to simplify the ansatz even further. 
As a first guided truncation, we set the following fourteen parameters to zero:
\begin{eqnarray}
\mathcal{Z}_{0} &=& \{ r_{1,2}, r_{1,4}, r_{1,6}, r_{1,7}, r_{1,8},
 r_{2,6}, r_{2,7}, r_{2,8}, \nonumber\\
&& r_{2,10}, r_{2,11}, r_{2,12}, r_{4,6}, r_{4,7}, r_{6,7} \},
\end{eqnarray}
that is, we imposed
\begin{eqnarray}
z = 0, \quad \forall z \in \mathcal{Z}_{0}.
\end{eqnarray}
Each of the 30 parameters labels an orbit of matrix elements that are interconnected via discrete symmetries. Setting any of these parameters to zero removes from the picture all elements related to this parameter by symmetries. Now, after obtaining this particular pattern of zeros and reducing the resulting polynomial equations to a simpler form, we find an additional factorization pattern in the structure of the equations. Inspection of the resulting polynomial equations reveals that the $r_{2,13}$, $r_{2,14}$ and $r_{2,15}$ parameters repeatedly appear as factors in several equations. The occurrence of such a pattern implies that the above parameters define natural algebraic branches. Therefore we should focus our attention on the corresponding ansatz branch. Motivated by this observation, we further restricted the ansatz to the branch
\begin{eqnarray}
r_{2,13} = r_{2,14}= r_{2,15} = 0.
\end{eqnarray}
Thus the complete set of zero assignments used at this stage is
\begin{eqnarray}
\mathcal{Z}_{1} = \mathcal{Z}_{0} \cup \{ r_{2,13}, r_{2,14}, r_{2,15} \}.
\end{eqnarray}
With this second reduction, the number of remaining independent parameters is substantially decreased; the surviving parameters are
\begin{eqnarray}
\mathcal{V}_{1} &=& \{ r_{1,1}, r_{1,16}, r_{2,2}, r_{2,3}, r_{2,4}, r_{2,5}, r_{2,9}, \nonumber\\
&& r_{4,4}, r_{4,13}, r_{6,6}, r_{6,11}, r_{7,7}, r_{7,10} \}.
\end{eqnarray}
After applying the ansatz, the number of polynomial equations is reduced substantially, from $328$ to $148$. At this stage, we have simplified the polynomial system enough to allow for a detailed algebraic analysis. We begin by using a direct symbolic solution method on the current system, treating all the remaining independent parameters as unknowns and not placing any further restrictions. If we successfully solve the system, we will record the raw solution set as it is, without making any additional simplifications or introducing new branches. If the direct symbolic solver, such as \textsc{Mathematica}, cannot find the solutions, or if the resulting system is still too complex, we will analyze the factorization patterns of the remaining equations to discover more consistent branches. Each reduction we make trims down the number of independent parameters and equations, leading us to smaller polynomial systems where we can apply the same approach over and over.

For the above branch $\mathcal{Z}_{1}$, the reduced ansatz yields a polynomial system that admits $728$ symbolic solution branches. For 190 of these branches, the determinant of \(R\) is not identically zero; these branches therefore contain generically invertible \(R\)-matrices. For the remaining 538 branches, the determinant vanishes identically. Because the gYBE is homogeneous and invariant under the continuous symmetries described below, these branches do not consist of isolated matrices but of parameterized families. In each non-trivial branch, at least the overall scale remains free. An explicit example of a non-trivial invertible branch is given by,
\begin{equation*}
\resizebox{\linewidth}{!}{%
$
\begin{pmatrix}
 0 & 0 & 0 & 0 & 0 & 0 & 0 & 0 & 0 & 0 & 0 & 0 & 0 & 0 & 0 & r_{1,16} \\
 0 & 0 & 0 & 0 & r_{2,5} & 0 & 0 & 0 & r_{2,9} & 0 & 0 & 0 & 0 & 0 & 0 & 0 \\
 0 & 0 & 0 & 0 & r_{2,9} & 0 & 0 & 0 & r_{2,5} & 0 & 0 & 0 & 0 & 0 & 0 & 0 \\
 0 & 0 & 0 & r_{4,4} & 0 & 0 & 0 & 0 & 0 & 0 & 0 & 0 & 0 & 0 & 0 & 0 \\
 0 & r_{2,5} & r_{2,9} & 0 & 0 & 0 & 0 & 0 & 0 & 0 & 0 & 0 & 0 & 0 & 0 & 0 \\
 0 & 0 & 0 & 0 & 0 & r_{6,6} & 0 & 0 & 0 & 0 & r_{7,7} & 0 & 0 & 0 & 0 & 0 \\
 0 & 0 & 0 & 0 & 0 & 0 & r_{7,7} & 0 & 0 & r_{6,6} & 0 & 0 & 0 & 0 & 0 & 0 \\
 0 & 0 & 0 & 0 & 0 & 0 & 0 & 0 & 0 & 0 & 0 & 0 & 0 & r_{2,5} & r_{2,9} & 0 \\
 0 & r_{2,9} & r_{2,5} & 0 & 0 & 0 & 0 & 0 & 0 & 0 & 0 & 0 & 0 & 0 & 0 & 0 \\
 0 & 0 & 0 & 0 & 0 & 0 & r_{6,6} & 0 & 0 & r_{7,7} & 0 & 0 & 0 & 0 & 0 & 0 \\
 0 & 0 & 0 & 0 & 0 & r_{7,7} & 0 & 0 & 0 & 0 & r_{6,6} & 0 & 0 & 0 & 0 & 0 \\
 0 & 0 & 0 & 0 & 0 & 0 & 0 & 0 & 0 & 0 & 0 & 0 & 0 & r_{2,9} & r_{2,5} & 0 \\
 0 & 0 & 0 & 0 & 0 & 0 & 0 & 0 & 0 & 0 & 0 & 0 & r_{4,4} & 0 & 0 & 0 \\
 0 & 0 & 0 & 0 & 0 & 0 & 0 & r_{2,5} & 0 & 0 & 0 & r_{2,9} & 0 & 0 & 0 & 0 \\
 0 & 0 & 0 & 0 & 0 & 0 & 0 & r_{2,9} & 0 & 0 & 0 & r_{2,5} & 0 & 0 & 0 & 0 \\
 r_{1,16} & 0 & 0 & 0 & 0 & 0 & 0 & 0 & 0 & 0 & 0 & 0 & 0 & 0 & 0 & 0 \\
\end{pmatrix}.
$}
\end{equation*}
Direct substitution shows that this matrix satisfies the \((2,4,2)\)-gYBE identically for arbitrary values of its six parameters. This $R$ is invertible for generic parameter values; explicitly, whenever
$r_{1,16}r_{4,4} \bigl(r_{2,5}^{2}-r_{2,9}^{2}\bigr) \bigl(r_{6,6}^{2}-r_{7,7}^{2}\bigr)\neq 0.$ Furthermore we can check that this $(2,4,2)$ $R$-matrix is not obtained from a product of the $(2,2,1)$ Yang-Baxter solutions. To do this we need to ensure that the following factorizations, $R=Y_{13}\widetilde{Y}_{24}$ or
$R=Y_{14}Y_{23}$, are not possible\footnote{Note that both these decompositions satisfy the $(d,4,2)$-gYBE when $Y$ and $\tilde{Y}$ satisfy the standard YBE or the $(d,2,1)$-gYBE.}. Now let $\mathcal{R}_{A\mid B}$ denote operator realignment with respect to the
bipartition $A\mid B$. For the above factorizations to hold we require that $\mathcal{R}_{13\mid24}(R)$ or $\mathcal{R}_{14\mid23}(R)$, respectively, has rank one. However, both realignments contain the minor
\[
\det\begin{pmatrix}r_{1,16}&0\\0&r_{4,4}\end{pmatrix}
=r_{1,16}r_{4,4}\neq0,
\]
where the last inequality follows from the invertibility of $R$. Hence neither
factorization is possible.

Once we have collected the raw solution branches from a specific branch, the next task is to eliminate any duplicates and solutions that are equivalent due to symmetry. It is quite interesting to note that solutions that seem different might actually be related through the continuous symmetries of the gYBE. 
In the present case, the relevant continuous transformations are the block gauge transformations generated by an invertible two-site operator $Q_{ij}, ~Q\in \mathrm{GL}(4,\mathbb{C})$. The resulting continuous entangled conjugation symmetry is
%The first is the local single-site conjugation symmetry,
%\begin{eqnarray}
%R &\longmapsto& \kappa~ (Q_{1}\otimes Q_{2}\otimes Q_{3}\otimes Q_{4})\, R\,
%(Q_{1}\otimes Q_{2}\otimes Q_{3}\otimes Q_{4})^{-1},
%\end{eqnarray}
%where $\kappa \neq 0$ and each $Q_i$ is an invertible single-site operator. This transformation contains the global scaling $R \mapsto \kappa~R$ as the special case $Q_i =\mathbf{1}$. The second is the two-site entangled conjugation symmetry,
\begin{eqnarray}
R &\longmapsto& \kappa~(Q_{12}\otimes Q_{34})\, R\, (Q_{12}\otimes Q_{34})^{-1},
\end{eqnarray}
where $\kappa \neq 0$ and $Q_{12}$ acts on sites $1,2$ and $Q_{34}$ acts on sites $3,4$. This transformation contains the global scaling $R \mapsto \kappa~R$ as the special case $Q_{ij} =\mathbf{1}$.

These continuous symmetries can be used to bring the solutions into simpler
representatives and to identify solutions that are equivalent under gauge
transformations. In addition, one may introduce the partial-trace matrices
\begin{eqnarray}
\rho_{12}(R) = \mathrm{Tr}_{34}\, R, \qquad
\rho_{34}(R) = \mathrm{Tr}_{12}\, R,
\end{eqnarray}
which act on sites $1,2$ and $3,4$, respectively. They transform covariantly under the two-site continuous symmetries,
\begin{eqnarray}
\rho_{12}(R) &\longmapsto& \kappa~ Q_{12}\, \rho_{12}(R)\, Q_{12}^{-1}, \nonumber\\
\rho_{34}(R) &\longmapsto& \kappa~ Q_{34}\, \rho_{34}(R)\, Q_{34}^{-1}.
\end{eqnarray}
These partial-trace matrices are useful in the final stage of the analysis, where they help us create canonical representatives and weed out duplicate solutions that are linked by continuous transformations. Their main function is not to set the initial branch but to support the final classification and identification of genuinely inequivalent solutions, much like the role of trace matrices in Hietarinta's canonical-form algorithm \cite{hietarinta1993-JMP-Long}.

It is worth mentioning that all solutions obtained through the specified branch $\mathcal{Z}_1$ are necessarily symmetric because of the condition $R=R^T$. It is also evident that after imposing $\mathcal{Z}_1$, the symbolic solver returns $728$ algebraic solution branches. This number should not be interpreted as the number of inequivalent $R$-matrices modulo the continuous symmetries. Rather, it only counts the raw branches produced by the solver before removing specializations, and symmetry-equivalent solutions. So a complete classification of the solutions within a single branch is already nontrivial, while other branches may yield additional solutions that are not necessarily related to those we find here \textit{via} continuous or gauge transformations. A systematic classification of all branches is therefore beyond the scope of the present work. Our main focus is more straightforward: we are looking to establish a systematic approach to refine the initial ansatz to uncover genuine four-site solutions of the gYBE, rather than attempting a complete classification of the continuous families of solutions.

%------------------------------------------------
\section{Conclusion}
\label{sec:conclusion}
%------------------------------------------------
The standard YBE or the $(d,2,1)$-gYBE contains a set of highly non-linear and over-determined equations to determine the matrix elements of the solution $R$. One way to simplify this problem is to consider the symmetries of the $R$ operator as they reduce the number of unknowns and thus help in simplifying the ans\"{a}tze used to determine them. This approach has been successfully adopted to find the constant $4\times 4$ YBE solutions in \cite{HIETARINTA-PLA}. Baxterizing these solutions leads to spectral parameter dependent $R$-matrices that are used to construct different integrable spin chains with nearest-neighbor interactions \cite{Maity_2025,Maity_2026}. 

It is natural to extend the above studies to a multi-site generalization of the YBE known as the $(d,l,m)$-gYBE. As these equations are more non-linear and over-determined when compared to the YBE, one would imagine that obtaining solutions for them would become significantly harder. However, in this work we have shown that this in fact need not necessarily be true as the gYBEs have larger symmetry groups that can substantially reduce the space of inequivalent solutions and provide useful symmetry-invariant sectors in which to conduct classification searches. While some of the symmetries of the YBE carry over to the gYBE in a straightforward manner, key differences appear in the case of constrained and unconstrained continuous symmetries and in the case of the discrete permutation symmetries. The latter is formulated in Theorem \ref{thm:universal-permutation-symmetries}, while the results of the former can be found in Theorems \ref{thm:gauge-symmetries}, \ref{thm:twist-gYBE-1}, \ref{thm:twist-gYBE-2}, \ref{thm:twist-gYBE-4}, \ref{thm:twist-gYBE-5}, \ref{thm:gen-twist-gYBE-I} and \ref{thm:gen-twist-gYBE-II}. An explicit example through the $(2,4,2)$-gYBE ansatz demonstrated the usefulness of these symmetries in significantly reducing the number of unknown entries in their matrix ans\"{a}tze.

At this point it is important to emphasize that there could be more symmetries for the gYBEs that are not extensions of analogous symmetries of the YBE. In particular, it would be interesting to consider Drinfeld twist symmetries \cite{kulish1998twistingsolutionsyangbaxterequation} of the gYBE and the underlying quantum group structure. It would also be interesting to look for the symmetries of the vertex models associated to the gYBEs along the lines of \cite{BELLON199187}.

We hope that the results in this work can be used to make judicious choices of ans\"{a}tze for multi-site \(R\)-matrices and thereby facilitate the construction of new solutions with applications to multi-site integrable systems. In particular, it can help us systematically construct integrable models with interactions that go beyond the nearest-neighbor ones. In this context, spectral parameter dependent solutions of the gYBE, based on {\it extraspecial 2-groups}, have recently been used to construct multi-site versions of the Ising models \cite{sinha2026hiddenisingmodelsgeneralized}. In addition to this, some preliminary studies have shown us that the QISM applied on the gYBE can lead to coupled integrable systems. These include examples where a Hermitian integrable system is coupled to a non-Hermitian model \cite{Shibata_2019,Shibata_2019-2}. We will return to these in the near future.

%------------------------------------------------

\subsection*{Acknowledgements}

The authors thank the anonymous OCNMP referee for valuable comments that improved the manuscript.

\label{lastpage}
\end{document}